
\pdfoutput=1

\documentclass{article}
\usepackage[numbers]{natbib}
\usepackage{fullpage}

\usepackage{macros}
\usepackage{xspace}
\usepackage{xcolor}
\definecolor{darkblue}{rgb}{0,0,.5}
\usepackage[colorlinks=true,allcolors=darkblue]{hyperref}
\usepackage{bbm}
\usepackage{tikz}
\usetikzlibrary{matrix}
\usepackage{overpic}
\usepackage{makecell}
\usetikzlibrary{patterns}
\usepackage{authblk}
\usepackage{enumerate}
\usepackage{amssymb}
\usepackage{amsbsy}
\usepackage{amsmath}
\usepackage{amsthm}
\usepackage{amsfonts}

\usepackage{latexsym}
\usepackage{graphicx}
\usepackage{color}
\usepackage{xifthen}

\usepackage{color-edits}

\usepackage{latexsym}

\usepackage{subcaption}

\newcommand{\R}{\mathbb{R}}
\newcommand{\Z}{\mathbb{Z}}

\newcommand{\N}{\mathbb{N}}

\newtheorem{theorem}{Theorem}
\newtheorem{lemma}{Lemma}[section]

\newtheorem{definition}{Definition}[section]

\newtheorem{remark}{Remark}

\begin{document}

\title{Adaptive Privacy Composition for Accuracy-first Mechanisms}

\author[1]{Ryan Rogers}
\author[1,3]{Gennady Samorodnitsky}
\author[2]{Zhiwei Steven Wu}
\author[2]{Aaditya Ramdas}
\affil[1]{Data and AI Foundations. LinkedIn}
\affil[2]{Carnegie Mellon University}
\affil[3]{Cornell University}

\maketitle 
\begin{abstract}
In many practical applications of differential privacy, practitioners seek to provide the best privacy guarantees subject to a target level of accuracy. A recent line of work by \cite{LigettNeRoWaWu17, WhitehouseWuRaRo22} has developed such accuracy-first mechanisms by leveraging the idea of \emph{noise reduction} that adds correlated noise to the sufficient statistic in a private computation and produces a sequence of increasingly accurate answers. A major advantage of noise reduction mechanisms is that the analysts only pay the privacy cost of the least noisy or most accurate answer released. Despite this appealing property in isolation, there has not been a systematic study on how to use them in conjunction with other differentially private mechanisms. A fundamental challenge is that the privacy guarantee for noise reduction mechanisms is (necessarily) formulated as \emph{ex-post privacy} that bounds the privacy loss as a function of the released outcome. Furthermore, there has yet to be any study on how ex-post private mechanisms compose, which allows us to track the accumulated privacy over several mechanisms.  We develop privacy filters \citep{RogersRoUlVa16, FeldmanZr21, WhitehouseRaRoWu22} that allow an analyst to adaptively switch between differentially private and ex-post private mechanisms subject to an overall differential privacy guarantee. 
\end{abstract}

\section{Introduction}

Although differential privacy has been recognized by the research community as the de-facto standard to ensure the privacy of a sensitive dataset while still allowing useful insights, it has yet to become widely applied in practice despite its promise to ensure formal privacy guarantees.  There are notable applications of differential privacy, including the U.S. Census~\citep{Census22}, yet few would argue that differential privacy has become quite standard in practice.

One common objection to differential privacy is that it injects noise and can cause spurious results for data analyses.  
A recent line of work in differential privacy has focused on developing \emph{accuracy-first} mechanisms that aim to ensure a target accuracy guarantee while achieving the best privacy guarantee  \citep{LigettNeRoWaWu17, WhitehouseWuRaRo22}. In particular, these \emph{accuracy-first} mechanisms do not ensure a predetermined level of privacy, but instead provide \emph{ex-post privacy}, which allows the resulting privacy loss to depend on the outcome of the mechanism.  This is in contrast to the prevalent paradigm of differential privacy that 
fixes the scale of privacy noise in advance and hopes the result is accurate. With accuracy-first mechanisms, practitioners instead specify the levels of accuracy that would ensure useful data analyses and then aim to achieve such utility with the strongest privacy guarantee. 

However, one of the limitations of this line of work is that it is not clear how ex-post privacy mechanisms \emph{compose}, so if we combine multiple ex-post privacy mechanisms, what is the overall privacy guarantee?  Composition is one of the key properties of  differential privacy (when used in a \emph{privacy-first} manner), so it is important to develop a composition theory for ex-post privacy mechanisms.  Moreover, how do we analyze the privacy guarantee when we compose ex-post privacy mechanisms with differentially private mechanisms?

Our work seeks to answer these questions by connecting with another line work in differential privacy on \emph{fully adaptive privacy composition}.
 Traditional differential privacy composition results would require the analyst to fix privacy loss parameters, which is then inversely proportional to the scale of noise, for each analysis in advance, prior to any interaction.  Knowing that there will be noise, the data scientist may want to select different levels of noise for different analyses, subject to some overall privacy budget.  Privacy filters and odometers, introduced in \cite{RogersRoUlVa16}, provide a way to bound the overall privacy guarantee despite adaptively selected privacy loss parameters.  There have since been other works that have improved on the privacy loss bounds in this adaptive setting, to the point of matching (including constants) what one achieves in a nonadaptive setting \citep{FeldmanZr21, WhitehouseRaRoWu22}.  

A natural next step would then be to allow an analyst some overall privacy loss budget to interact with the dataset and the analyst can then determine the accuracy metric they want to set with each new query.  As a motivating example, consider some accuracy metric of $\alpha$\% relative error to different counts with some overall privacy loss parameters $\epsilon,\delta$, so that the entire interaction will be $(\diffp, \delta)$-differentially private. The first true count might be very large, so the amount of noise that needs to be added to ensure the target $\alpha$\% relative error can be huge, and hence very little of the privacy budget should be used for that query, allowing for potentially more results to be returned than an approach that sets an a priori noise level.  

A baseline approach to add relative noise would be to add a large amount of noise and then check if the noisy count is within some tolerance based on the scale of noise added, then if the noisy count is deemed good enough we stop, otherwise we scale the privacy loss up by some factor and repeat.  We refer to this approach as the \emph{doubling approach} (see Section~\ref{sect:Doubling} for more details), which was also used in \cite{LigettNeRoWaWu17}.  The primary issue with this approach is that the accumulated privacy loss needs to combine the privacy loss each time we add noise, even though we are only interested in the outcome when we stopped.  Noise reduction mechanisms from \cite{LigettNeRoWaWu17, WhitehouseWuRaRo22} show how it is possible to only pay for the privacy of the last noise addition.  However, it is not clear how the privacy loss will accumulate over several noise reduction mechanisms, since each one ensures ex-post privacy, not differential privacy.  

We make the following contributions in this work:
\begin{itemize}
    \item We present a general (basic) composition result for ex-post privacy mechanisms
    that can be used to create a privacy filter when an analyst can select arbitrary ex-post privacy mechanisms and (concentrated) differentially private mechanisms.  
    \item We develop a unified privacy filter that combines noise reduction mechanisms --- specifically the Brownian Mechanism \citep{WhitehouseWuRaRo22} --- with traditional (concentrated) differentially private mechanisms.
    \item We apply our results to the task of releasing counts from a dataset subject to a relative error bound comparing the unified privacy filter and the baseline doubling approach, which uses the privacy filters from \cite{WhitehouseRaRoWu22}.  
\end{itemize}

Our main technical contribution is in the unified privacy filter for noise reduction mechanisms and differentially private mechanisms.  Prior work~\citep{WhitehouseWuRaRo22} showed that the privacy loss of the Brownian noise reduction can be written in terms of a scaled Brownian motion at the time where the noise reduction was stopped.  We present a new analysis for the ex-post privacy guarantee of the Brownian mechanism that considers a reverse time martingale, based on a scaled standard Brownian motion.  Composition bounds for differential privacy consider a forward time martingale and apply a concentration bound, such as Azuma's inequality \citep{DworkRoVa10}, so we show how we can construct a forward time martingale from the stopped Brownian motions, despite the stopping times being adaptive, not predictable, at each time. See Figure~\ref{fig:ReverseAndForwardFiltration} for a sketch of how we combine  reverse time martingales into an overall forward time filtration. 

\begin{figure}
    \centering
    \includegraphics[width=0.7\textwidth]{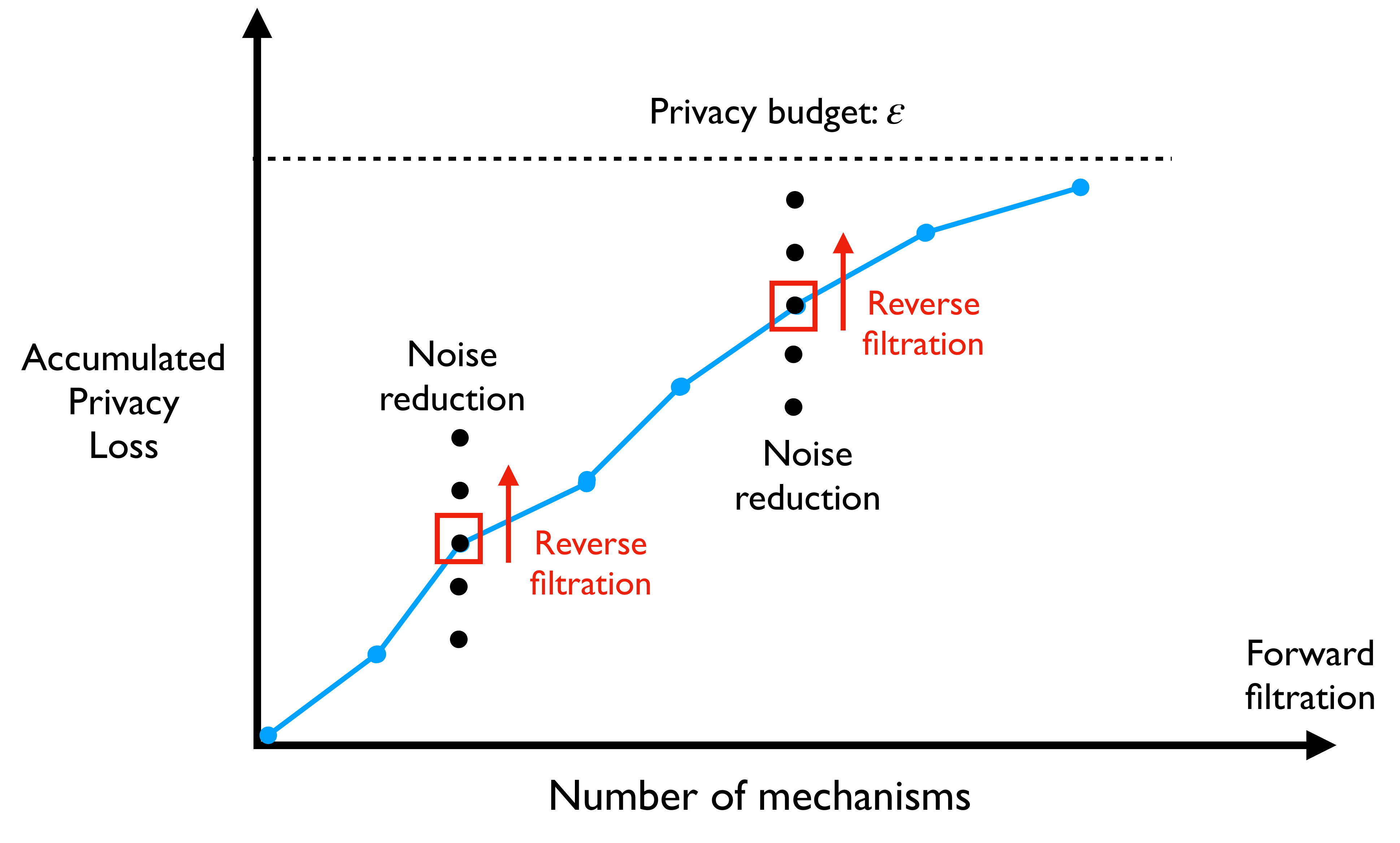}
    \caption{
    Our privacy filter tracks an accumulated privacy loss over many different mechanisms (forward filtration, X-axis) and stopping when it exceeds a predefined $\epsilon$ (dotted line). Each mechanism satisfies approximate zCDP (blue dot) or is a noise reduction mechanism (black dots). The latter itself involves several rounds of interaction in a reverse filtration (red arrow) until a stopping criterion based on utility is met (red box). Later queries/mechanisms can depend on past ones.
    }
    \label{fig:ReverseAndForwardFiltration}
\end{figure}

As a running example, we consider the problem of releasing as many counts as possible subject to the constraint that each noisy count should have no more than a target relative error and that some overall privacy loss budget is preserved.  A recent application of differential privacy for releasing Wikipedia usage data\footnote{\url{https://www.tmlt.io/resources/publishing-wikipedia-usage-data-with-strong-privacy-guarantees}} considered relative accuracy as a top utility metric, demonstrating the importance of returning private counts subject to relative accuracy constraints.  There have been other works that have considered adding relative noise subject to differential privacy.  In particular, iReduct \citep{XiaoBeHaGe11} was developed to release a batch of queries subject to a relative error bound.  The primary difference between that work and our setting here is that we do not want to fix the number of queries in advance and we want to allow the queries to be selected in an adaptive way.  \citet{XiaoBeHaGe11} consider a batch of $m$ queries and initially adds a lot of noise to each count and iteratively checks whether the noisy counts are good enough.  The counts that should have smaller noise are then identified and a Laplace based noise reduction algorithm is used to decrease the noise on the identified set. They continue in this way until either all counts are good enough or a target privacy loss is exhausted.  There are two scenarios that might arise: (1) all counts satisfy the relative error condition and we should have tried more counts because some privacy loss budget remains, (2) the procedure stopped before some results had a target relative error, so we should have selected fewer queries.  In either case, selecting the number of queries becomes a parameter that the data analyst would need to select in advance, which would be difficult to do a priori.   In our setting, no such parameter arises.  Furthermore, they add up the privacy loss parameters for each count to see if it is below a target privacy loss bound at each step (based on the $\ell_1$-general sensitivity), however we show that adding up the privacy loss parameters can be significantly improved on.  Other works have modified the definition of differential privacy to accommodate relative error, e.g. \cite{Wang22}.

\section{Preliminaries}
We start with some basic definitions from differential privacy, beginning with the standard definition from \citet{DworkMcNiSm06, DworkKeMcMiNa06}.  We first need to define what we mean by \emph{neighboring datasets}, which can mean adding or removing one record to a dataset or changing an entry in a dataset.  We will leave the neighboring relation arbitrary, and write $x,x' \in \cX$ to be neighbors as $x\sim x'$.
\begin{definition}
An algorithm  $A : \cX \rightarrow \cY$ is $(\epsilon, \delta)$-differentially private if, for any measurable set $E \subset \cY$ and any neighboring inputs $x \sim x'$, 
\begin{equation}
\label{eq:dp}
\Pr[A(x) \in E] \leq e^\epsilon \Pr[A(x') \in E] + \delta.
\end{equation}
If $\delta = 0$, we say $A$ is $\diffp$-DP or simply \emph{pure DP}.  
\end{definition}

The classical pure DP mechanism is the Laplace mechanism, which adds noise to a statistic following a Laplace distribution.
\begin{definition}[Laplace Mechanism]
Given a statistic $f: \cX \to \R^d$, the Laplace Mechanism $M: \cX \to \R^d$ with privacy parameter $\diffp$ returns $M(x) = f(x) + \lap(1/\diffp)$.
\end{definition}
In order to ensure enough noise is added to a statistic to ensure privacy, it is important to know the statistic's \emph{sensitivity} which we define as the following where the max is over all possible neighboring datasets $x,x'$
\[
\Delta_p(f) \defeq \max_{x,x': x \sim x'} \left\{ || f(x) - f(x') ||_p \right\}.
\]
\begin{lemma}[\citet{DworkMcNiSm06}]
Given $f: \cX \to \R^d$, the Laplace mechanism with privacy parameter $\diffp>0$ is $\Delta_1(f) \cdot \diffp$-DP.  
\end{lemma}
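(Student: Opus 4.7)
The plan is a textbook density-ratio argument. First, I would unpack the mechanism: for input $x$, the output $M(x)$ has density (with respect to Lebesgue measure on $\R^d$) equal to the product of $d$ independent $\lap(1/\diffp)$ densities centered at the coordinates of $f(x)$. Writing this out,
\[
p_x(y) \;=\; \prod_{i=1}^d \tfrac{\diffp}{2}\exp\!\bigl(-\diffp\,|y_i - f(x)_i|\bigr) \;=\; \bigl(\tfrac{\diffp}{2}\bigr)^d \exp\!\bigl(-\diffp\,\|y - f(x)\|_1\bigr).
\]
The goal is to show that the pointwise ratio $p_x(y)/p_{x'}(y)$ is uniformly bounded by $\exp(\diffp \cdot \Delta_1(f))$ for all neighboring $x \sim x'$ and all $y \in \R^d$.

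Next I would control that ratio by the reverse triangle inequality on the $\ell_1$ norm:
\[
\|y - f(x')\|_1 - \|y - f(x)\|_1 \;\leq\; \|f(x) - f(x')\|_1 \;\leq\; \Delta_1(f),
\]
where the second inequality is just the definition of the $\ell_1$-sensitivity. Substituting into the ratio of densities gives $p_x(y)/p_{x'}(y) \leq \exp(\diffp \Delta_1(f))$ pointwise in $y$.

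Finally, I would integrate this pointwise bound over any measurable $E \subset \R^d$:
\[
\Pr[M(x) \in E] \;=\; \int_E p_x(y)\,dy \;\leq\; e^{\diffp \Delta_1(f)} \int_E p_{x'}(y)\,dy \;=\; e^{\diffp \Delta_1(f)} \Pr[M(x') \in E],
\]
which is exactly the definition of $(\diffp \cdot \Delta_1(f), 0)$-DP from equation~\eqref{eq:dp} with $\delta = 0$.

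There is essentially no obstacle here — every step is standard. The one thing worth being careful about is the direction of the triangle inequality (making sure to bound $-|y - f(x')|$ by $-|y - f(x)| + \Delta_1(f)$ rather than the reverse), and handling the $d$-dimensional case by using the additivity of $\|\cdot\|_1$ across coordinates so that the product of exponentials collapses into a single exponential of the $\ell_1$ distance. No measure-theoretic subtleties arise because the Laplace density is strictly positive everywhere, so the ratio is well-defined.
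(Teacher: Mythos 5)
Your proof is correct: the pointwise density-ratio bound via the reverse triangle inequality on $\|\cdot\|_1$, followed by integration over the event $E$, is exactly the canonical argument for this result. The paper itself states this lemma as a citation to \citet{DworkMcNiSm06} without reproducing a proof, and your argument matches the standard one given in that reference, so there is nothing to add.
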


A central actor in our analysis will be the \emph{privacy loss} of an algorithm, which will be a random variable that depends on the outcomes of the algorithm under a particular dataset.  

\begin{definition}[Privacy Loss]
\label{def:ploss}
Let $A : \cX \rightarrow \cY$ be an algorithm and fix neighbors $x \sim x'$ in $\cX$. Let $p^x$ and $p^{x'}$ be the respective densities of $A(x)$ and $A(x')$ on the space $\cY$ with respect to some reference measure. Then, the privacy loss between $A(x)$ and $A(x')$ evaluated at point $y \in \cY$ is:
\begin{equation*}
\cL_A(y;x, x') := \log\left(\frac{p^x(y)}{p^{x'}(y)}\right).
\end{equation*}
Further, we refer to the privacy loss random variable to be $\cL_A(x,x') \defeq \cL_A(A(x);x,x')$. When the algorithm $A$ and the neighboring datasets are clear from context, we drop them from the privacy loss, i.e. $\cL(y)  = \cL_A(y;x, x')$ and $\cL = \cL_A(x,x')$. 
\end{definition}

\subsection{Zero-concentrated DP}

Many existing privacy composition analyses leverage a variant of DP called (approximate) \emph{zero-concentrated DP} (zCDP), introduced by \citet{BunSt16}.  Recall that the R\'enyi Divergence of order $\lambda \geq 1$ between two distributions $P$ and $Q$ on the same domain is written as the following where $p(\cdot)$ and $q(\cdot)$ are the respective probability mass/density functions,
\[
D_\lambda(P || Q) \defeq \frac{1}{\lambda - 1} \log\left(\E_{y \sim P}\left[ \left( \frac{p(y)}{q(y)} \right)^{\lambda - 1} \right]\right).
\]
Since we study fully adaptive privacy composition (where privacy parameters can be chosen adaptively), we will use the following \emph{conditional} extension of approximate zCDP, in which the zCDP parameters of a mechanism $A$ can depend on prior outcomes.

\begin{definition}[Approximate zCDP~\citep{WhitehouseRaRoWu22}]\label{def:approxazcdp}
Suppose $A: \cX \times \cZ \to \cY$ with outputs in a measurable space $(\cY, \cG)$. Suppose $\delta, \rho: \cZ \rightarrow \R_{\geq 0}$. We say the algorithm $A$ satisfies conditional $\delta(z)$-approximate  $\rho(z)$-zCDP if, for all $z \in \cZ$ and any neighboring datasets $x, x'$, there exist probability transition kernels $P', P'', Q', Q'' : \cZ \times \cG \rightarrow [0, 1]$ such that the conditional outputs are distributed according to the following mixture distributions:
\begin{align*}
 A(x; z)  & \sim (1 - \delta(z))P'(\cdot \mid z) + \delta(z) P''(\cdot \mid z)  \\
A(x'; z) & \sim (1 - \delta(z))Q'(\cdot \mid z) + \delta(z) Q''(\cdot \mid z),
\end{align*}
where for all $\lambda\geq 1$, $D_\lambda(P'(\cdot \mid z) \| Q'(\cdot \mid z)) \leq \rho(z) \lambda$ and $D_\lambda(Q'(\cdot \mid z) \| P'(\cdot \mid z)) \leq \rho(z) \lambda$,  $\forall z \in \cZ$.
\end{definition}

The classical mechanism for zCDP is the Gaussian mechanism, similar to how the Laplace mechanism is the typical mechanism for pure DP.
\begin{definition}[Gaussian Mechanism]
Given a statistic $f: \cX \to \R^d$, the Gaussian Mechanism $M: \cX \to \R^d$ with privacy parameter $\rho>0$ returns $M(x) = \Normal{f(x)}{1/\rho \cdot I_d}$.
\end{definition}
We then have the following result, which ensures privacy that scales with the $\ell_2$ sensitivity of statistic $f$.
\begin{lemma}[\citet{BunSt16}]
Given $f: \cX \to \R^d$, the Gaussian mechanism with privacy parameter $\rho$ is $\Delta_2(f)^2 \cdot \rho/2$-zCDP.  
\end{lemma}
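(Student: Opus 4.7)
The plan is to verify Definition~\ref{def:approxazcdp} directly with $\delta(z) \equiv 0$, which reduces the task to bounding the R\'enyi divergence between two spherical Gaussians with a common covariance. Since the Gaussian mechanism has no dependence on a prior transcript $z$, the conditional structure is trivial: I would take $P'(\cdot \mid z) = \Normal{f(x)}{(1/\rho) I_d}$ and $Q'(\cdot \mid z) = \Normal{f(x')}{(1/\rho) I_d}$, and let $P'', Q''$ be arbitrary (they carry zero weight in the mixture).

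The main step is to compute $D_\lambda(P'(\cdot \mid z) \,\|\, Q'(\cdot \mid z))$ for fixed neighbors $x \sim x'$. Writing $v = f(x) - f(x')$, the log density ratio $\log(p(y)/q(y))$ is an affine function of $y$, so $(p/q)^{\lambda - 1}$ is a quadratic exponential in $y$. Integrating this against $p$ amounts to evaluating a Gaussian moment generating function; completing the square yields the standard closed form
$$D_\lambda\!\left(\Normal{\mu_1}{\sigma^2 I_d} \,\|\, \Normal{\mu_2}{\sigma^2 I_d}\right) = \frac{\lambda \, \|\mu_1 - \mu_2\|_2^2}{2 \sigma^2}.$$
Specializing to $\sigma^2 = 1/\rho$ gives $D_\lambda(P' \,\|\, Q') = \tfrac{\lambda \rho \|v\|_2^2}{2}$, and the bound $\|v\|_2 \le \Delta_2(f)$ from the definition of $\ell_2$-sensitivity then yields $D_\lambda(P' \,\|\, Q') \le \tfrac{\rho \Delta_2(f)^2}{2}\,\lambda$. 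Because the Gaussian densities are symmetric under swapping $f(x) \leftrightarrow f(x')$, exactly the same bound holds for $D_\lambda(Q' \,\|\, P')$, which verifies both inequalities in Definition~\ref{def:approxazcdp}.

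There is no substantive obstacle here; the lemma essentially re-expresses the standard Gaussian R\'enyi divergence identity in the language of approximate zCDP used in this paper. The only items that need care are (i) checking that the trivial mixture decomposition with $\delta(z) \equiv 0$ formally fits Definition~\ref{def:approxazcdp}, and (ii) observing that the sensitivity bound $\|f(x) - f(x')\|_2 \le \Delta_2(f)$ is uniform over neighboring pairs, so the R\'enyi bound $\rho(z) = \rho \Delta_2(f)^2/2$ can be taken independent of $z$.
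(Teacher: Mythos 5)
Your proof is correct and is essentially the standard argument: the paper gives no proof of its own (the lemma is imported from \citet{BunSt16}), and the cited proof is exactly your computation -- the closed-form R\'enyi divergence $\tfrac{\lambda\|\mu_1-\mu_2\|_2^2}{2\sigma^2}$ between equal-covariance Gaussians, specialized to $\sigma^2 = 1/\rho$ and bounded via $\|f(x)-f(x')\|_2 \le \Delta_2(f)$, with the trivial $\delta \equiv 0$ mixture fitting Definition~\ref{def:approxazcdp}. No discrepancies to report.
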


The following results establish the relationship between zCDP and DP and the composition of zCDP.
\begin{lemma}[\citet{BunSt16}]
\label{lem:CDPconversion}
If $M: \cX \to \cY$ is $(\diffp,\delta)$-DP, then $M$ is also $\delta$-approximate $\diffp^2/2$-zCDP.  If $M$ is $\delta$-approximate $\rho$-zCDP, then $M$ is also $(\rho + 2\sqrt{\rho \log(1/\delta'')}, \delta + \delta'')$-DP for $\delta'' >0$.
\end{lemma}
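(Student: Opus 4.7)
The plan is to prove the two implications separately, each reducing to the standard (non-approximate) zCDP-DP correspondence together with a mixture decomposition that accounts for the $\delta$-failure mass.

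\textbf{Direction 1: $(\diffp,\delta)$-DP $\Rightarrow$ $\delta$-approximate $\diffp^2/2$-zCDP.}
First I would invoke the well-known decomposition of approximate DP into a mixture of pure DP distributions (sometimes attributed to Kasiviswanathan--Smith): if $M$ is $(\diffp,\delta)$-DP, then for every neighboring pair $x \sim x'$ one can write $M(x) \sim (1-\delta) P' + \delta P''$ and $M(x') \sim (1-\delta) Q' + \delta Q''$ where $D_\infty(P' \| Q') \leq \diffp$ and $D_\infty(Q' \| P') \leq \diffp$. This is exactly the mixture structure required by Definition~\ref{def:approxazcdp} (with no $z$-dependence, extending trivially to the conditional case). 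It then suffices to show that a pure $\diffp$-DP pair $P',Q'$ satisfies $D_\lambda(P' \| Q') \leq \diffp^2 \lambda/2$ for all $\lambda \geq 1$. Since the log-density ratio $\log(p'/q')$ is bounded in $[-\diffp,\diffp]$, Hoeffding's lemma applied under $Q'$ gives
\begin{equation*}
\E_{y \sim Q'}\left[\exp\left((\lambda-1)\log\tfrac{p'(y)}{q'(y)}\right) \cdot \tfrac{p'(y)}{q'(y)}\right]
= \E_{y \sim P'}\left[\exp\left((\lambda-1)\log\tfrac{p'(y)}{q'(y)}\right)\right] \leq \exp\!\left(\tfrac{\lambda(\lambda-1)\diffp^2}{2}\right),
\end{equation*}
which after taking $\tfrac{1}{\lambda-1}\log(\cdot)$ yields $D_\lambda(P'\|Q') \leq \diffp^2 \lambda / 2$; the symmetric bound is identical.

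\textbf{Direction 2: $\delta$-approximate $\rho$-zCDP $\Rightarrow$ $(\rho + 2\sqrt{\rho\log(1/\delta'')}, \delta+\delta'')$-DP.}
Using the mixture decomposition from the definition, for any event $E$,
\begin{equation*}
\Pr[M(x) \in E] \leq (1-\delta)\, P'(E) + \delta \leq P'(E) + \delta,
\end{equation*}
so it is enough to show $P'(E) \leq e^{\diffp'} Q'(E) + \delta''$ for $\diffp' := \rho + 2\sqrt{\rho\log(1/\delta'')}$, since then $\Pr[M(x) \in E] \leq e^{\diffp'}\Pr[M(x') \in E] + \delta + \delta''$. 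For this I would run the standard zCDP-to-DP tail argument on the privacy loss $\cL(y) = \log(p'(y)/q'(y))$. The zCDP hypothesis gives $\E_{y \sim P'}[e^{(\lambda-1)\cL(y)}] \leq e^{\rho\lambda(\lambda-1)}$ for every $\lambda \geq 1$, so by Markov's inequality
\begin{equation*}
\Pr_{y \sim P'}[\cL(y) > t] \leq \exp\!\big((\lambda-1)(\rho\lambda - t)\big).
\end{equation*}
Minimizing the exponent over $\lambda \geq 1$ (optimum at $\lambda^\star = \tfrac{1}{2} + \tfrac{t}{2\rho}$, valid when $t \geq \rho$) yields $\Pr[\cL > t] \leq \exp(-(t-\rho)^2/(4\rho))$. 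Setting this equal to $\delta''$ gives $t = \rho + 2\sqrt{\rho\log(1/\delta'')}$. Splitting $E$ into the ``high privacy loss'' region $\{y : \cL(y) > t\}$ (which contributes at most $\delta''$) and its complement (on which $p'(y) \leq e^t q'(y)$ pointwise) completes the argument.

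\textbf{Main obstacle.} The content of both directions is essentially bookkeeping around the mixture decomposition in Definition~\ref{def:approxazcdp}; the genuinely substantive step is the Markov-plus-optimization tail bound in Direction 2, where one must choose the exponent and restrict to $\lambda \geq 1$ carefully (the case $t < \rho$ is trivial since then $\diffp' \geq t$ automatically). The Direction 1 mixture decomposition of $(\diffp,\delta)$-DP into pure $\diffp$-DP plus a $\delta$ failure mode is a known but non-trivial fact that I would cite rather than reprove, leaving only Hoeffding's lemma to finish.
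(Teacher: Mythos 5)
The paper does not prove this lemma at all --- it is imported verbatim from Bun and Steinke, so there is no in-paper argument to compare against. Your overall architecture (mixture decomposition to reduce approximate zCDP to the pure/non-approximate case, a Hoeffding-type moment bound for the first direction, and Markov plus optimization over $\lambda$ for the second) is exactly the standard Bun--Steinke route, and your Direction 2 computation ($\lambda^\star = \tfrac{1}{2} + \tfrac{t}{2\rho}$, exponent $-(t-\rho)^2/(4\rho)$, hence $t = \rho + 2\sqrt{\rho\log(1/\delta'')}$) is correct.

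There is, however, a real soft spot in Direction 1. Hoeffding's lemma applied to $\cL \in [-\diffp,\diffp]$ under $P'$ gives
\[
\E_{P'}\bigl[e^{(\lambda-1)\cL}\bigr] \leq \exp\Bigl((\lambda-1)\,\mathrm{KL}(P'\|Q') + \tfrac{(\lambda-1)^2\diffp^2}{2}\Bigr),
\]
and to reach your claimed bound $\exp(\lambda(\lambda-1)\diffp^2/2)$ you still need $\mathrm{KL}(P'\|Q') \leq \diffp^2/2$. That inequality is true but is not a consequence of boundedness alone (boundedness only gives $\mathrm{KL} \leq \diffp$, which is far weaker for small $\diffp$); it requires using both the two-sided bound and the normalization $\E_{P'}[e^{-\cL}] = 1$, and it is precisely the content of Bun--Steinke's Proposition 3.3 (the extremal case being the two-point distribution with $\mathrm{KL} = \diffp\tanh(\diffp/2)$). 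You should either supply that KL bound or cite the proposition outright rather than attributing the whole step to Hoeffding. Separately, in Direction 2 the line ``$\Pr[M(x)\in E] \leq P'(E) + \delta$, so it suffices that $P'(E) \leq e^{\diffp'}Q'(E)+\delta''$'' does not quite chain, because $Q'(E)$ need not be bounded by $\Pr[M(x')\in E]$; keep the $(1-\delta)$ factor, i.e.\ $\Pr[M(x)\in E] \leq (1-\delta)P'(E)+\delta \leq e^{\diffp'}(1-\delta)Q'(E) + \delta'' + \delta \leq e^{\diffp'}\Pr[M(x')\in E] + \delta + \delta''$. That fix is trivial; the Direction 1 issue is the one that needs an actual additional argument.
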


\begin{lemma}[\citet{BunSt16}]
\label{lem:CDPComp}
If $M_1: \cX \to \cY$ is $\delta_1$-approximate $\rho_1$-zCDP and $M_2: \cX \times \cY \to \cY'$ is $\delta_2$-approximate $\rho_2$-zCDP in its first coordinate, then $M: \cX \to \cY'$ where $M(x) = (M_1(x), M_2(x,M_1(x)))$ is $(\delta_1 + \delta_2)$-approximate $(\rho_1 + \rho_2)$-zCDP.
\end{lemma}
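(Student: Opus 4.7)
My plan is to combine the mixture decomposition granted by Definition~\ref{def:approxazcdp} with the standard chain rule for Rényi divergence. The first step is to invoke the definition for each mechanism: for $M_1$ on arbitrary neighbors $x, x'$, obtain transition kernels $P_1', P_1'', Q_1', Q_1''$ such that $M_1(x) \sim (1-\delta_1)P_1' + \delta_1 P_1''$ and $M_1(x') \sim (1-\delta_1)Q_1' + \delta_1 Q_1''$, with $D_\lambda(P_1'\|Q_1') \leq \rho_1 \lambda$ and $D_\lambda(Q_1'\|P_1') \leq \rho_1 \lambda$. Then, viewing the first coordinate $y_1$ as the conditioning variable $z$, obtain the analogous kernels $P_2'(\cdot|y_1), P_2''(\cdot|y_1), Q_2'(\cdot|y_1), Q_2''(\cdot|y_1)$ for $M_2(\cdot;y_1)$ with $D_\lambda$ bounded by $\rho_2 \lambda$ uniformly in $y_1$.

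Next I would expand the joint distribution of $M(x) = (M_1(x), M_2(x, M_1(x)))$. The $y_1$-marginal is the mixture above, and conditional on $y_1$ the $y_2$-distribution is itself a mixture, so multiplying yields four joint components. I would isolate the ``good'' joint kernel $P'(y_1,y_2) := P_1'(y_1)\, P_2'(y_2|y_1)$, which carries weight $(1-\delta_1)(1-\delta_2)$, and lump the remaining three components into a bad part of total mass $1-(1-\delta_1)(1-\delta_2) = \delta_1+\delta_2-\delta_1\delta_2 \le \delta_1+\delta_2$. A standard convexity padding argument---rewriting $(1-\delta_a)P' + \delta_a P''$ as $(1-\delta_b)P' + \delta_b \widetilde{P}$ for any $\delta_b \geq \delta_a$, with $\widetilde{P}$ a convex combination of $P'$ and $P''$---lets me normalize both mixtures to have exact mass $\delta_1+\delta_2$ on the bad part, while keeping $P'$ unchanged. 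Performing the same decomposition on $M(x')$ produces the good kernel $Q'(y_1,y_2) := Q_1'(y_1)\, Q_2'(y_2|y_1)$.

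The core technical step is the chain-rule bound $D_\lambda(P' \| Q') \leq (\rho_1 + \rho_2)\lambda$. Writing $D_\lambda(P\|Q) = \frac{1}{\lambda-1}\log \int p^\lambda q^{1-\lambda}$ and factoring,
\[
\int\!\!\int P_1'(y_1)^\lambda Q_1'(y_1)^{1-\lambda} \, P_2'(y_2|y_1)^\lambda Q_2'(y_2|y_1)^{1-\lambda}\, dy_2\, dy_1,
\]
I would pull the inner $y_2$-integral out via the uniform bound $\int P_2'(y_2|y_1)^\lambda Q_2'(y_2|y_1)^{1-\lambda} dy_2 \leq \exp((\lambda-1)\rho_2 \lambda)$, which is valid because the conditional Rényi bound holds for \emph{every} $y_1$. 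This yields $D_\lambda(P'\|Q') \leq \rho_1 \lambda + \rho_2 \lambda$, and the symmetric direction is identical. Combined with the mass accounting on the bad part and the preservation of the adaptive structure ($M_2$'s parameters are allowed to depend on $y_1$), this gives $(\delta_1+\delta_2)$-approximate $(\rho_1+\rho_2)$-zCDP for $M$.

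The main obstacle is the bookkeeping of the four-component mixture and ensuring that a single pair $(P',Q')$ of good kernels works on both sides of the neighbor comparison while also satisfying the two-sided Rényi condition; no new probabilistic inequality beyond the chain rule is needed. A minor subtlety is that the conditional bound on $D_\lambda(P_2'(\cdot|y_1)\|Q_2'(\cdot|y_1))$ must be uniform in $y_1$, which is exactly what the ``for all $z \in \cZ$'' clause of Definition~\ref{def:approxazcdp} provides, so the uniform pull-out is legal.
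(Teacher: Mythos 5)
The paper offers no proof of this lemma; it is imported verbatim from \citet{BunSt16}. Your argument---decomposing each mechanism's output into a $(1-\delta_i)$-weighted ``good'' kernel plus a bad remainder, taking the product of the good kernels as the joint good kernel, bounding the residual mass by $\delta_1+\delta_2-\delta_1\delta_2\le\delta_1+\delta_2$ with a convexity padding, and then applying the R\'enyi chain rule with the conditional bound pulled out uniformly in $y_1$---is correct and is essentially the standard proof given in the cited reference, so there is nothing to compare against within this paper itself.
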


\subsection{Concurrent composition}

We will use \emph{concurrent composition} for differential privacy, introduced by \cite{VadhanWa21} and with subsequent work in \cite{Lyu22, VadhanWa23}, in our analysis, which we define next.  We first define an interactive system.
\begin{definition}
    An interactive system is a randomized algorithm $S: (\cQ \times \cY)^* \times \cQ \to \cY$ with input an interactive history $(q_1, y_1), (q_2, y_2), \cdots, (q_t, y_t) \in (\cQ \times \cY)^t$ with a query $q_{t+1} \in \cQ$.  The output of $S$ is denoted $y_{t+1} \sim S((q_i, y_i)_{i \in [t]}, q_{t+1})$.
\end{definition}
Note that an interactive system may also consist of an input dataset $x$ that the queries are evaluated on, which will then induce an interactive system.  In particular, we will consider two neighboring datasets $x^{(0)}$ and $x^{(1)}$, which will then induce interactive systems $S^{(b)}_i$ corresponding to input data $x^{(b)}$ for $b \in \{ 0,1\}$ and $i\in [k]$.  We will use the concurrent composition definition from \cite{Lyu22}.
\begin{definition}[Concurrent Composition]
    Suppose $S_1, \cdots, S_k$ are $k$ interactive systems.  The concurrent composition of them is an interactive system $\mathrm{COMP}(S_1, \cdots, S_k)$ with query domain $[k]\times \cQ$ and response domain $\cY$. An adversary is a (possibly randomized) query algorithm $\cA : ([k] \times \cQ \times \cY)^* \to [k] \times \cQ$ . The interaction between $\cA$ and $\mathrm{COMP}(S_1, \cdots, S_k)$ is a stochastic process that runs as follows.  $\cA$ first computes a pair $(i_1, q_1) \in [k] \times \cQ$ , sends a query $q_1$ to $S_{i_1}$ and gets the response $y_1$. In the $t$-th step, $\cA$ calculates the next pair $(i_t, q_t)$ based on the history, sends the $t$-th query $q_t$ to $S_{i_t}$ and receives $y_t$.  There is no communication or interaction between the interactive systems.  Each system $S_i$ can only see its own interaction with $\cA$. Let $\mathrm{IT}(\cA : S_1, \cdots ,S_k)$ denote the random variable recording the transcript of the interaction.
\end{definition}

We will be interested in how much the distribution of the transcript of interaction changes for interactive systems with neighboring inputs.  We will say $COMP(S_1, \cdots, S_k)$ is $(\diffp, \delta)$-DP if for all neighboring datasets $x^{(0)}, x^{(1)}$, we have for any outcome set of transcripts $E$, we have for $b \in \{ 0,1\}$
\[
\Pr[\mathrm{IT}(\cA : S_1^{(b)}, \cdots ,S_k^{(b)}) \in E] \leq e^\diffp \Pr[\mathrm{IT}(\cA : S_1^{(1-b)}, \cdots ,S_k^{(1-b)}) \in E] + \delta.
\]
We then have the following result from \cite{Lyu22}.
\begin{theorem}\label{thm:concurrent}
Let $A_1, \cdots , A_k$ be $k$ interactive mechanisms that run on the same data set. Suppose
that each mechanism $A_i$ satisfies $(\diffp_i, \delta_i)$-DP. Then $\mathrm{COMP}(\cA: A_1, \cdots, A_k)$ is $(\diffp, \delta)$-DP, where $\diffp, \delta$ are given by the optimal (sequential) composition theorem \cite{KairouzOhVi17, MurtaghVa16}.
\end{theorem}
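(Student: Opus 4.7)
The plan is to bound the privacy loss random variable of the full interleaved transcript, show that it decomposes into a sum of $k$ per-mechanism privacy losses, and then invoke the optimal sequential composition theorem of \citet{KairouzOhVi17, MurtaghVa16} on the resulting sum. Fix neighboring datasets $x^{(0)}, x^{(1)}$ and, without loss of generality, a deterministic adversary $\cA$. First I would write out the density of the transcript $T = ((i_1,q_1,y_1),\ldots,(i_n,q_n,y_n))$ under each input as a product over time steps and consider the log-likelihood ratio.

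The key structural observation is that by the definition of $\mathrm{COMP}$, there is no communication between interactive systems, so $A_{i_t}$'s response distribution at step $t$ depends only on its own sub-transcript $T^{(i_t)}_{<t}$ (the restriction of $T_{<t}$ to entries with index $i_t$), not on the full interleaved history. Since $\cA$'s choice of $(i_t, q_t)$ is a deterministic function of the public past $T_{<t}$, those factors cancel in the likelihood ratio. Grouping the remaining product by which mechanism was queried, the log-likelihood ratio telescopes as
\begin{equation*}
L \;:=\; \log \frac{P_{x^{(0)}}[T]}{P_{x^{(1)}}[T]} \;=\; \sum_{i=1}^{k} \, \sum_{t:\, i_t = i} \log \frac{p^{x^{(0)}}_{A_i}(y_t \mid T^{(i)}_{<t}, q_t)}{p^{x^{(1)}}_{A_i}(y_t \mid T^{(i)}_{<t}, q_t)} \;=:\; \sum_{i=1}^{k} L_i,
\end{equation*}
so that each $L_i$ is precisely the privacy loss collected by $A_i$'s sub-protocol with $\cA$.

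The main obstacle is controlling each $L_i$ as if $A_i$ were run in isolation: the queries that $A_i$ eventually sees are not purely adaptive in its own transcript, because the adversary's choice of when to query $A_i$ (and with what) depends on the outputs of all the other systems. To handle this, I would condition on the full transcripts of $A_j$ for $j\neq i$; under this conditioning, the interaction with $A_i$ collapses to an ordinary (non-concurrent) adaptive adversary, and the $(\diffp_i,\delta_i)$-DP hypothesis on $A_i$ as an interactive mechanism yields the usual tail/PLD bound on $L_i$. Consequently, $L = \sum_i L_i$ can be viewed as the privacy loss of an adaptive sequential composition of $k$ mechanisms with parameters $(\diffp_i,\delta_i)$, and the optimal sequential composition theorem applies verbatim to deliver the claimed $(\diffp,\delta)$. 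The approximate-DP case requires the standard trick of passing through the dominating pair / privacy loss distribution characterization so that the $\delta_i$'s compose correctly after the conditioning step, which is where the bulk of the technical care would go.
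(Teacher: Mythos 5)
The paper does not prove this statement at all; it is imported verbatim from Lyu (2022), so the only question is whether your argument is a valid proof. It is not: the conditioning step is exactly the point where the naive reduction from concurrent to sequential composition breaks down, and it is why concurrent composition is a nontrivial theorem rather than a corollary. Your decomposition $L=\sum_i L_i$ of the transcript log-likelihood ratio is a correct algebraic identity (the adversary's routing factors cancel and the per-system factors group by index). The problem is that after you ``condition on the full transcripts of $A_j$ for $j\neq i$,'' the interaction with $A_i$ does \emph{not} collapse to an ordinary adaptive adversary. The other systems' transcripts are downstream of $A_i$'s outputs (the adversary feeds $A_i$'s answers into its queries to $A_j$ and vice versa), so conditioning on them tilts the law of $A_i$'s sub-transcript away from anything realizable by $A_i$ interacting with a data-independent adversary. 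Equivalently, the ``effective adversary'' facing $A_i$ must simulate $A_j(x)$ for $j\neq i$, which requires access to the dataset; and under the alternative hypothesis it would have to simulate $A_j(x^{(1)})$ instead, so the two hypotheses do not correspond to a single adversary interacting with $A_i$ on neighboring inputs. Hence the $(\diffp_i,\delta_i)$-DP hypothesis gives you no tail or PLD bound on the conditional law of $L_i$, and the optimal composition theorem cannot be applied to $\sum_i L_i$ as you propose.

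The missing idea in the actual proof (Lyu 2022; see also Vadhan--Wang 2021 for the pure-DP case) is a \emph{simulation lemma}: every $(\diffp_i,\delta_i)$-DP interactive mechanism can be exactly simulated as an interactive post-processing of a single non-interactive draw from the randomized-response pair $RR_{\diffp_i,\delta_i}$. Once each $A_i$ is replaced by such a post-processing, the entire concurrent interaction --- adversary, interleaving, and all --- is a post-processing of the product distribution $\bigotimes_i RR_{\diffp_i,\delta_i}$, and the optimal sequential composition theorem of Kairouz--Oh--Viswanath and Murtagh--Vadhan (which is precisely a statement about composing randomized responses) applies by closure of DP under post-processing. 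If you want to repair your write-up, that reduction is the step you need to supply; the transcript-factorization and conditioning machinery can be discarded.
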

This result is very powerful because we can consider the privacy of each interactive system separately and still be able to provide a differential privacy guarantee for the more complex setting that allows an analyst to interweave queries to different interactive systems.

\section{Privacy Filters}

In order for us to reason about the overall privacy loss of an interaction with a sensitive dataset, we will use the framework of \emph{privacy filters}, introduced in \cite{RogersRoUlVa16}.  Privacy filters allow an analyst to adaptively select privacy parameters as a function of previous outcomes until some stopping time that determines whether a target privacy budget has been exhausted.  To denote that privacy parameters may depend on previous outcomes, we will write $\rho_n(x)$ to mean the privacy parameter selected at round $n$ that could depend on the previous outcomes $A_{1:n-1}(x)$, and similarly for $\delta_n(x)$. We now state the definition of privacy filters in the context of approximate zCDP mechanisms.
\begin{definition}[Privacy Filter]
\label{def:filter} 
Let $(A_n : \cX \to \cY)_{n \geq 1}$ be an adaptive sequence of algorithms such that, for all $n \geq 1$, $A_n(\cdot; y_{1:n-1})$ is $\delta_n(y_{1:n-1})$-approximate $\rho_n(y_{1:n-1})$-zCDP for all $y_{1:n-1} \in \cY^{n-1}$. Let $\epsilon > 0$ and $\delta \geq 0$ be fixed privacy parameters. Then, a function $N : \cY^\infty \rightarrow \N$ is an $(\epsilon, \delta)$-privacy filter if 
\begin{enumerate}
    \itemsep=0em
    \item for all $(y_1, y_2, \cdots) \in \cY^\infty$, $N(y_1, y_2, \cdots)$ is a stopping time with respect to the natural filtration generated by $(A_n(x))_{n \geq 1}$, and
    \item the algorithm $A_{1 : N(\cdot)}(\cdot)$ is $(\epsilon, \delta)$-differentially private where $N(x) = N(A_1(x), A_2(x), \cdots)$.
\end{enumerate}
\end{definition}

\citet{WhitehouseRaRoWu22} showed that we can use composition bounds from traditional differential privacy (which required setting privacy parameters for the interaction prior to any interaction) in the more adaptive setting, where privacy parameters can be set before each query.

\begin{theorem}
\label{thm:approxCDPFilter}
Suppose $(A_n: \cX \to \cY)_{n \geq 1}$ is a sequence of algorithms such that, for any $n \geq 1$, $A_n(\cdot; y_{1:n-1})$ is $\delta_n(y_{1:n-1})$-approximate $\rho_n(y_{1:n-1})$-zCDP for all $y_{1:n-1}$. Let $\epsilon > 0$  and $\delta\geq 0$, $\delta'' > 0$ be fixed privacy parameters. Consider the function $N : \R_{\geq 0}^\infty \rightarrow \N$ given by
\begin{align*}
& N(y_1, y_2, \cdots) \\
& \qquad := \inf\left\{n  : \epsilon < 2\sqrt{\log\left(\frac{1}{\delta''}\right)\sum_{m \leq n + 1}\rho_m(y_{1:m-1})} + \sum_{m \leq n + 1}\rho_m(y_{1:m-1}) \ \text{ or }\ \delta <  \sum_{m \leq n + 1}\delta_m(y_{1:m-1})\right\}.
\end{align*}
Then, the algorithm $A_{1:N(\cdot)}(\cdot) : \cX \rightarrow \cY^*$ is $(\epsilon, \delta+ \delta'')$-DP, where $N(x) := N((A_n(x))_{n \geq 1})$. In other words, $N$ is an $(\epsilon, \delta)$-privacy filter.
\end{theorem}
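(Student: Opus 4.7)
The plan is to reduce the approximate-zCDP condition to a pure-zCDP composition problem, then bound the composed privacy loss at the data-dependent stopping time $N$ via an exponential supermartingale and Ville's inequality. For the first reduction, I would use the mixture decomposition of Definition \ref{def:approxazcdp} to couple each $A_n(\cdot; y_{1:n-1})$ with a pure $\rho_n$-zCDP mechanism so that the two agree with probability $1 - \delta_n(y_{1:n-1})$. An inductive coupling across rounds shows that if the all-pure version satisfies $(\epsilon, \delta'')$-DP, then the original satisfies $(\epsilon, \delta'' + \sum_{m \le N}\delta_m)$-DP, and the second clause in the definition of $N$ forces $\sum_{m \le N}\delta_m \le \delta$ (at index $N-1$ neither stopping condition has triggered). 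Hence it suffices to prove $(\epsilon, \delta'')$-DP when all $\delta_n \equiv 0$.

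For the pure case, fix neighbors $x \sim x'$, let $\cF_n := \sigma(A_1(x), \ldots, A_n(x))$, and let $\cL_n$ denote the per-round privacy loss. The R\'enyi bound $D_\lambda(P_n \| Q_n) \le \rho_n \lambda$ is equivalent to the conditional MGF bound $\E[\exp((\lambda-1)\cL_n) \mid \cF_{n-1}] \le \exp((\lambda-1)\lambda \rho_n)$. Because the adaptively chosen $\rho_n = \rho_n(y_{1:n-1})$ is $\cF_{n-1}$-measurable, for each fixed $\lambda > 1$ the process
\begin{equation*}
M_n^\lambda := \exp\Bigl((\lambda - 1)\sum_{m \le n}\cL_m - (\lambda-1)\lambda\sum_{m \le n}\rho_m\Bigr)
\end{equation*}
is a nonnegative supermartingale with $M_0^\lambda = 1$. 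Ville's maximal inequality then gives, with probability at least $1 - \delta''$, the time-uniform bound $\sum_{m \le n}\cL_m \le \lambda\sum_{m \le n}\rho_m + \log(1/\delta'')/(\lambda-1)$ for every $n \ge 1$ simultaneously.

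To conclude, the first clause in the definition of $N$ forces $R_N := \sum_{m \le N}\rho_m \le r^\star$, where $r^\star$ is the unique positive root of $r + 2\sqrt{\log(1/\delta'')\,r} = \epsilon$; this follows because $r \mapsto r + 2\sqrt{\log(1/\delta'')\,r}$ is strictly increasing and the non-triggering condition at index $N-1$ keeps it at most $\epsilon$ when evaluated at $R_N$. I then choose the fixed constant $\lambda^\star := 1 + \sqrt{\log(1/\delta'')/r^\star}$, which is the unconstrained minimizer of $\lambda \mapsto \lambda r^\star + \log(1/\delta'')/(\lambda-1)$, and substitute $R_N \le r^\star$ into the Ville's bound to obtain $\sum_{m \le N}\cL_m \le \lambda^\star r^\star + \log(1/\delta'')/(\lambda^\star - 1) = \epsilon$ with probability at least $1 - \delta''$. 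The standard conversion from a high-probability privacy-loss bound to approximate DP (split any event $E$ according to whether $\sum_m \cL_m \le \epsilon$) then closes the pure case, and combining with the first step yields the claimed $(\epsilon, \delta + \delta'')$-DP guarantee. The main subtlety I expect to work through is the interplay between adaptive $\rho_n$, the filtration, and the stopping rule: predictability of $\rho_n$ is essential for $M_n^\lambda$ to be a true supermartingale, and the one-step lookahead baked into the definition of $N$ is exactly what lets me calibrate $\lambda^\star$ to the deterministic threshold $r^\star$ rather than the random $R_N$, thereby avoiding a method-of-mixtures argument that would otherwise be the standard technical hurdle in time-uniform concentration of this flavor.
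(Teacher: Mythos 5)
The paper states Theorem~\ref{thm:approxCDPFilter} as an imported result of \citet{WhitehouseRaRoWu22} and gives no proof of its own, but your argument is correct and matches the approach of that work (and of this paper's proof of Theorem~\ref{thm:BMAdvancedComp}): the mixture decomposition reducing to the pure-zCDP kernels, the predictable exponential supermartingale, and calibration of $\lambda^\star$ to the deterministic budget $r^\star$ enforced by the one-step-lookahead stopping rule. Your use of Ville's inequality at the fixed $\lambda^\star$ in place of optional stopping followed by the zCDP-to-DP conversion (Lemma~\ref{lem:CDPconversion}) is an equivalent packaging of the same bound.
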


\section{Ex-post Private Mechanisms}

Although privacy filters allow a lot of flexibility to a data analyst in how they interact with a sensitive dataset while still guaranteeing a fixed privacy budget, there are some algorithms that ensure a bound on privacy that is adapted to the dataset.  Ex-post private mechanisms define privacy loss as a probabilistic bound which can depend on the algorithm's outcomes, so some outcomes might contain more information about an individual than others \citep{LigettNeRoWaWu17, WhitehouseWuRaRo22}.  Note that ex-post private mechanisms do not have any fixed a priori bound on the privacy loss, so by default they cannot be composed in a similar way to differentially private mechanisms.  
\begin{definition}
\label{def:expost}
Let $A : \cX \times \cY \rightarrow \cZ$ be an algorithm and $\Epsilon : \cZ \times \cY \rightarrow \R_{\geq 0}$ a function. We say $A(\cdot; y)$ is $(\Epsilon(\cdot; y), \delta(y))$-ex-post private for all $y \in \cY$ if, for any neighboring inputs $x \sim x'$, we have 
$\Pr\left[\cL(A(x; y)) > \Epsilon(A(x;y); y)\right] \leq \delta(y)$ for all $y \in \cY$.
\end{definition}

We next  define a single noise reduction mechanism, which will interactively apply sub-mechanisms and stop at some time $k$, which can be random.  Each iterate will use a privacy parameter from an increasing sequence of privacy parameters $(\diffp^{(k)}: k \geq 1 )$ and the overall privacy will only depend on the last privacy parameter $\epsilon^{(k)}$, despite releasing noisy values with parameters $\diffp^{(i)}$ for $i \leq k$.   Noise reduction algorithms will allow us to form ex-post private mechanisms because the privacy loss will only depend on the final outcome.  We will write  $M : \cX \to \cY^*$ to be any algorithm mapping databases to sequences of outputs in $\cY$, with intermediate mechanisms written as $M^{(k)} : \cX \to \cY$ for the $k$-th element of the sequence and $M^{(1:k)} : \cX \to \cY^k$ for the first $k$ elements.  
Let $\mu$ be a probability measure on $\cY$, for each $k\geq 1$ let $\mu_k$ be a probability measure on ${\cY}^k$, and  let $\mu^*$ be a probability measure on ${\cY}^*$.  We assume that the law of $M^{(k)}(x)$ on $\cY$ is equivalent to $\mu$, the law of $M(x)$ on ${\cY}^*$ is equivalent to $\mu^*$, and the law of $M^{(1:k)}(x)$ on ${\cY}^k$ is equivalent to $\mu_k$ for every $k$ and every $x$. 
Furthermore, we will write $\cL$ to be the privacy loss of $M$, $\cL^{(k)}$ be the privacy loss of $M^{(k)}$, and $\cL^{(1:k)}$ to be the privacy loss of the sequence of mechanisms $M^{(1:k)}$.  We then define noise reduction mechanisms more formally.

\begin{definition}[Noise Reduction Mechanism]

\label{def:nr}
Let $M : \cX \to \cY^\infty$ be a mechanism mapping sequences of outcomes and $x,x'$ be any neighboring datasets.  
We say $M$ is a \emph{noise reduction mechanism} if for any $k \geq 1$,
\[
\cL^{(1:k)} = \cL^{(k)}.
\]
\end{definition}

We will assume there is a fixed grid of time values $t^{(1)} > t^{(2)} > \cdots > t^{(k)} > 0$. We will typically think of the time values as being inversely proportional to the noise we add, i.e. $t^{(i)} = \left(1/\diffp^{(i)}\right)^2$ where $\diffp^{(1)} < \diffp^{(2)} < \cdots$.
An analyst will not have a particular stopping time set in advance and will instead want to stop interacting with the dataset as a function of the noisy answers that have been released.  It might also be the case that the analyst wants to stop based on the outcome and the privatized dataset, but for now we consider stopping times that can only depend on the noisy outcomes or possibly some public information, not the underlying dataset.
\begin{definition}[Stopping Function]
\label{def:stop}
Let $A : \cX \to \cY^\infty$ be a noise reduction mechanism. For $x \in \cX$, let $(\cF^{(k)}(x))_{k \in \N}$ be the filtration given by $\cF^{(k)}(x) := \sigma(A^{(i)}(x) :i \leq k)$.  A function $T : \cY^\infty \to \N$ is called a stopping function if for any $x \in \cX$, $T(x) := T(A(x))$ is a stopping time with respect to $(\cF^{(k)}(x))_{k \geq 1}$.   Note that this property does not depend on the choice of measures $\mu$, $\mu^*$ and $\mu_k$.

\end{definition}

We now recall the noise reduction mechanism with Brownian noise \citep{WhitehouseWuRaRo22}.  

\begin{definition}[Brownian Noise Reduction]
\label{def:bm}
Let $f : \cX \to \R^d$ be a function and $(t^{(k)})_{k \geq 1}$ be a sequence of time values. Let $(B^{(t)})_{t \geq 0}$ be a standard $d$-dimensional Brownian motion and $T: (\R^d)^* \to \N$ be a stopping function. 
The Brownian mechanism associated with $f$, time values $(t^{(k)})_{k \geq 1}$, and stopping function $T$ is the algorithm $\BM : \cX \to ( (0, t^{(1)}] \times \R^d)^*$ given by
\[
\BM(x) :=  \left(t^{(k)}, f(x) + B^{(t^{(k)})}\right) _{k \leq T(x)}.
\]

\end{definition}

We then have the following result.  
\begin{lemma}[\citet{WhitehouseWuRaRo22}]
\label{lem:bm_nr}
The Brownian Noise Reduction mechanism $\BM$ is a noise reduction algorithm for a constant stopping function $T(x) = k$.  Furthermore, we have for any stopping function $T: \cY^* \to \N$, the noise reduction property still holds, i.e.
\[
\cL^{(1:T(x))} = \cL^{(T(x))}.
\]
\end{lemma}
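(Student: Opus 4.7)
The plan is to leverage the Markov (independent-increments) property of Brownian motion. Since $t^{(1)} > t^{(2)} > \cdots$, for any $i < k$ the increment $W^{(i)} := B^{(t^{(i)})} - B^{(t^{(k)})}$ is Gaussian with variance $t^{(i)} - t^{(k)}$ and, by independence of disjoint Brownian increments, is independent of $B^{(s)}$ for all $s \leq t^{(k)}$, in particular of $B^{(t^{(k)})}$. Moreover, the joint law of $(W^{(1)}, \ldots, W^{(k-1)})$ does not depend on the dataset $x$ at all. This decomposition is the crux.

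For the constant-time statement $\cL^{(1:k)} = \cL^{(k)}$, I would write each earlier coordinate as $M^{(i)}(x) = M^{(k)}(x) + W^{(i)}$ for $i < k$. Then the conditional distribution of $(M^{(1)}(x), \ldots, M^{(k-1)}(x))$ given $M^{(k)}(x) = y_k$ is the law of $(y_k + W^{(1)}, \ldots, y_k + W^{(k-1)})$, a fixed dataset-free distribution. Consequently the joint density relative to $\mu_k$ factors as
\[
p^x(y_{1:k}) \;=\; p^x(y_k) \cdot q(y_{1:k-1} \mid y_k),
\]
where $q$ does not depend on $x$. Taking logs of the ratio under neighbors $x \sim x'$, the $q$ factor cancels and we obtain $\cL^{(1:k)}(y_{1:k}) = \log\bigl(p^x(y_k)/p^{x'}(y_k)\bigr) = \cL^{(k)}(y_k)$, an equality of functions on $\cY^k$ which yields the random-variable identity $\cL^{(1:k)} = \cL^{(k)}$.

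To promote this to an arbitrary stopping function $T$, observe that $\{T(x) = k\}$ is $\cF^{(k)}(x)$-measurable by Definition~\ref{def:stop}, so there is a set $E_k \subset \cY^k$ with $\{T(x) = k\} = \{M^{(1:k)}(x) \in E_k\}$. On this event the output is the finite sequence $(y_1, \ldots, y_k) \in E_k$; the joint density picks up the additional factor $\mathbbm{1}_{E_k}(y_{1:k})$, which is identical under $x$ and $x'$ and therefore contributes nothing to the ratio. Applying the constant-time calculation pointwise on each $\{T = k\}$ (these events partition the sample space) gives $\cL^{(1:T(x))} = \cL^{(T(x))}$.

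The main obstacle is just the measure-theoretic bookkeeping: carefully identifying the reference measure on variable-length sequences and verifying that the conditional law of earlier coordinates given the final one is genuinely free of $x$. Once the Brownian-increment decomposition is in hand, both parts reduce to a one-line density-ratio cancellation, and the stopping-time extension is immediate from the $\cF^{(k)}$-measurability of $\{T=k\}$.
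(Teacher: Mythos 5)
Your proposal is correct and matches the paper's own treatment (given in the appendix for general independent-increment noise processes): there too, the key step is that the law of the constantly-shifted path on $[t^{(T)}, t^{(1)}]$ relative to the unshifted path is the ratio of marginal densities at the smallest time point, precisely because the increments above that point are independent of it and unaffected by the shift, and the stopping function is handled by partitioning on $\{T=k\}$, whose indicator depends only on the released outputs and so cancels in the density ratio. The only cosmetic difference is that the paper phrases the cancellation as a change of measure against the unshifted reference process rather than as an explicit conditional-density factorization given the last coordinate.
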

Another noise reduction mechanism uses Laplace noise \cite{KoufogiannisHaPa17, LigettNeRoWaWu17}, which we consider in the appendix.

\section{General Composition of Ex-post Private Mechanisms}
We start with a simple result that states that ex-post mechanisms compose by just adding up the ex-post privacy functions.  We will write $\delta_n(x) \defeq \delta_n(A_1(x), \cdots, A_{n-1})(x))$ and $\rho_n(x) \defeq \rho(A_1(x), \cdots, A_{n-1})(x))$.  Further, we will denote $\Epsilon_n(\cdot ~; x)$ as the privacy loss bound for algorithm $A_n$ conditioned on the outcomes of $A_{1:n-1}(x)$.

\begin{lemma}
\label{lem:basicComp}
Fix a sequence $\delta_1, \cdots, \delta_m \geq 0$.  Let there be a probability measure $\mu_n$ on $\cY_n$ for each $n$ and the product measure on $\cY_1 \times \cdots \cY_m$.  Consider mechanisms $A_n: \cX \times \prod_{i=1}^{n-1} \cY_i \to \cY_n$ for $n \in [m]$ where each $A_n(\cdot; y_{1:n-1})$ is $(\Epsilon_n(\cdot; y_{1:n-1}), \delta_n)$-ex-post private for all prior outcomes $y_{1:n-1})$.  Then the overall mechanism $A_{1:m}(\cdot)$ is $(\sum_{i=1}^m\Epsilon_i(A_i(\cdot); \cdot), \sum_{i=1}^m \delta_i)$-ex-post private with respect to the product measure.
\end{lemma}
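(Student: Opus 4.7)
\textbf{Proof plan for Lemma~\ref{lem:basicComp}.} The plan is to decompose the joint privacy loss across the $m$ rounds, apply the ex-post guarantee round by round, and then take a union bound over the ``bad'' events where individual round losses exceed their ex-post bounds. The product-measure assumption is exactly what lets us factor the joint density into a product of conditionals, so the chain rule for privacy losses goes through cleanly.

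First, I would fix neighbors $x \sim x'$ and write the joint density of $A_{1:m}(x) = (A_1(x), A_2(x; A_1(x)), \ldots, A_m(x; A_{1:m-1}(x)))$ with respect to $\mu_1 \otimes \cdots \otimes \mu_m$ as $p^x_{1:m}(y_{1:m}) = \prod_{i=1}^m p^x_i(y_i \mid y_{1:i-1})$, where $p^x_i(\cdot \mid y_{1:i-1})$ is the density of $A_i(x; y_{1:i-1})$ with respect to $\mu_i$. The same factorization applies to $x'$. Taking logs gives the telescoping identity
\[
\cL_{A_{1:m}}(y_{1:m}; x, x') \;=\; \sum_{i=1}^m \log\frac{p^x_i(y_i \mid y_{1:i-1})}{p^{x'}_i(y_i \mid y_{1:i-1})} \;=\; \sum_{i=1}^m \cL_{A_i(\cdot;\, y_{1:i-1})}(y_i;\, x, x').
\]
This is the key structural observation; it reduces the composed privacy loss, evaluated on a realized trajectory, to a sum of per-round privacy losses, each of which is controlled by its own ex-post bound.

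Next, for each $n \in [m]$, define the bad event
\[
B_n \;=\; \bigl\{\, \cL_{A_n(\cdot;\, A_{1:n-1}(x))}\bigl(A_n(x;\, A_{1:n-1}(x))\bigr) \;>\; \Epsilon_n\bigl(A_n(x;\, A_{1:n-1}(x));\, A_{1:n-1}(x)\bigr) \,\bigr\}.
\]
By the hypothesis that $A_n(\cdot; y_{1:n-1})$ is $(\Epsilon_n(\cdot; y_{1:n-1}), \delta_n)$-ex-post private for \emph{every} prior outcome sequence $y_{1:n-1}$, we have $\Pr[B_n \mid A_{1:n-1}(x) = y_{1:n-1}] \leq \delta_n$ pointwise in $y_{1:n-1}$; integrating (or applying the tower property) yields $\Pr[B_n] \leq \delta_n$. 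A union bound then gives $\Pr\bigl[\bigcup_{n=1}^m B_n\bigr] \leq \sum_{n=1}^m \delta_n$.

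Finally, on the complementary good event $\bigcap_n B_n^c$, every summand in the chain-rule identity above is bounded by the corresponding $\Epsilon_n$ evaluated at the realized outcome, so
\[
\cL_{A_{1:m}}\bigl(A_{1:m}(x);\, x, x'\bigr) \;\leq\; \sum_{i=1}^m \Epsilon_i\bigl(A_i(x;\, A_{1:i-1}(x));\, A_{1:i-1}(x)\bigr),
\]
which is exactly the composed ex-post bound. Combining this with the union-bound failure probability $\sum_i \delta_i$ yields the claim. I don't anticipate any serious obstacle; the only subtlety worth being careful about is ensuring that the ex-post hypothesis is used in its ``for all $y_{1:n-1}$'' form so that conditioning on the previous (random) transcript is legitimate, which is precisely how the definition is stated.
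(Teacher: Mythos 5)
Your proposal is correct and follows essentially the same route as the paper's proof: decompose the composed privacy loss as the telescoping sum of per-round losses via the product-measure factorization, bound each round's failure probability by $\delta_n$ using the ex-post hypothesis, and take a union bound. You are slightly more explicit than the paper about the density factorization and the tower-property step justifying conditioning on the realized transcript, but the argument is the same.
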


\begin{proof}
We consider neighboring inputs $x,x'$ and write the privacy loss random variable $\cL_A(A(x))$ for $A$ in terms of the privacy losses $\cL_i(A_i(x))$ of each $A_i$ for $i \in [m]$
\begin{align*}
\Pr\left[\cL_A(A(x)) < \sum_{i=1}^m\Epsilon_i(A_i(x); x) \right] &= \Pr\left[\sum_{i=1}^m \cL_i(A_i(x)) < \sum_{i=1}^m\Epsilon_i(A_i(x);x) \right] 
\end{align*}
Because each mechanism $A_i$ is $(\Epsilon_i(\cdot; x), \delta_i)$-ex post private, we have $\Pr\left[ \cL_i(A(x)) \geq \Epsilon_i(A_i(x); x)\right] \leq  \delta_i$ and hence
\begin{align*}
 \Pr\left[\sum_{i=1}^m \cL_i(A_i(x)) \leq \sum_{i=1}^m\Epsilon_i(A_i(x);x) \right] 
& \geq \Pr\left[ \bigcap_{i \in [m]} \left\{ \cL_i(A_i(x)) \leq \Epsilon_i(A_i(x);x) \right\} \right] 
\\
 & = 1 - \Pr\left[ \bigcup_{i \in [m]} \left\{ \cL_i(A_i(x)) > \Epsilon_i(A_i(x);x) \right\} \right] \\
 & \geq 1 - \sum_{i \in [m]} \Pr\left[\cL_i(A_i(x)) > \Epsilon_i(A_i(x);x) \right] \\
 & \geq 1 - \sum_{i \in [m]} \delta_i.
\end{align*}
Hence, we have $\Pr\left[\cL_A(A(x)) \geq  \sum_{i=1}^m\Epsilon_i(A_i(x); x) \right]  \leq \sum_{i \in [m]} \delta_i$, as desired.
\end{proof}

For general ex-post private mechanisms, this \emph{basic} composition cannot be improved.  We can simply pick $\Epsilon_i$ to be the same as the privacy loss $\cL_i$ at each round with independently selected mechanisms $A_i$ at each round $i \in [m]$.  
We now show how we can obtain a privacy filter from a sequence of ex-post mechanisms as long as each selected ex-post privacy mechanism selected at each round cannot exceed the remaining privacy budget.  We will write $\delta_n(x)$ to denote the parameter $\delta_n$ selected as a function of prior outcomes from $A_1(x), \cdots, A_{n-1}(x)$.
\begin{lemma}\label{lem:ex-post-basic-filter}
  Let $\diffp>0$ and $\delta\geq 0$ be fixed privacy parameters.  Let $(A_n: \cX \to \cY)_{n \geq 1}$ be a sequence of $(\Epsilon_n(\cdot;x), \delta_n(x))$-ex-post private conditioned on prior outcomes $y_{1:n-1} = A_{1:n-1}(x)$ where for all $y_i$ we have
  \[
  \sum_{i=1}^{n-1} \Epsilon_i(y_i;x) \leq \diffp.
  \]
  Consider the function $N: \cY^\infty \to \N$ where 
  \[
  N( (y_n)_{n \geq 1}) = \inf\left\{n : \diffp = \sum_{i \in [n]} \Epsilon_i(y_i;x) \ \text{  or  } \ \delta < \sum_{i \in [n+1]} \delta_i(y_{1:i-1})  \right\}.
  \]
  Then the algorithm $A_{1:N(\cdot)}(\cdot)$ is $(\diffp, \delta)$-DP where
  \[
  N(x) = N((A_n(x))_{n \geq 1}).
  \]
\end{lemma}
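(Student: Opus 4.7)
The plan is to apply the basic composition bound for ex-post privacy (Lemma~\ref{lem:basicComp}) to the stopped sequence $A_{1:N}$, argue that both the cumulative ex-post privacy function and the cumulative $\delta$ are bounded by $\diffp$ and $\delta$ respectively at time $N$, and finally convert this into $(\diffp, \delta)$-DP via the standard translation from an almost-sure privacy-loss bound to differential privacy.

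First I would decompose by the value of the stopping time. Since $N$ is a stopping time with respect to $\sigma(A_1, \dots, A_k)$, the event $\{N = k\}$ is determined by $(A_1, \dots, A_k)$, and on this event the density of $A_{1:N}(x)$ on $\cY^k$ factors as $\prod_{i=1}^{k} p^x_i(y_i \mid y_{1:i-1})$. Hence the privacy loss of the composed mechanism equals $\sum_{i=1}^{k} \cL_i(A_i(x))$. Repeating the union-bound argument from the proof of Lemma~\ref{lem:basicComp} on each event $\{N=k\}$ and summing gives that $A_{1:N}$ is ex-post private with privacy-loss function $\sum_{i=1}^{N} \Epsilon_i(A_i; x)$ and overall $\delta$-parameter $\sum_{i=1}^{N} \delta_i(x)$.

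Next I would bound both quantities at the stopping time. The hypothesis $\sum_{i=1}^{n-1}\Epsilon_i(y_i;x)\leq \diffp$ applied at $n = N+1$ yields $\sum_{i=1}^{N}\Epsilon_i(A_i(x);x)\leq \diffp$ almost surely, with equality exactly when $N$ is triggered by the first clause of the infimum. For the $\delta$-term, the minimality of $N$ forces $\sum_{i\in[n+1]}\delta_i(y_{1:i-1})\leq\delta$ for every $n<N$; taking $n = N-1$ gives $\sum_{i=1}^{N}\delta_i(x)\leq \delta$ on the full-measure event that $N<\infty$. Consequently, the composed mechanism $A_{1:N}$ is ex-post private with a function pointwise dominated by the constant $\diffp$ and an overall $\delta$-parameter at most $\delta$.

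Finally, because the privacy loss of $A_{1:N}$ is bounded by $\diffp$ except on an event of probability at most $\delta$, the standard DP-from-privacy-loss argument applies: for any measurable $E \subseteq \cY^*$ and any neighbors $x \sim x'$,
\[
\Pr[A_{1:N}(x)\in E] \leq \Pr[\cL>\diffp] + \int_{E\cap\{\cL\leq\diffp\}} p^x\,d\mu^* \leq \delta + e^{\diffp}\,\Pr[A_{1:N}(x')\in E].
\]
The main delicate point is formally extending Lemma~\ref{lem:basicComp} to a random horizon $N$: the decomposition across $\{N=k\}$ sidesteps any martingale machinery and reduces the argument to the fixed-length case already proved. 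Edge cases (such as $N=1$, or degenerate situations where $\Epsilon_i\equiv 0$) are handled by the usual conventions $\sum_{i\leq 0}\cdot =0$ and by noting that the constraint $\sum_{i=1}^{n-1}\Epsilon_i\leq \diffp$ is trivially satisfied for $n=1$.
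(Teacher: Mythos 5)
Your proposal is correct and follows essentially the same route as the paper: bound the probability that any per-round privacy loss exceeds its ex-post bound by $\delta$ via a union bound over rounds, observe that on the complementary event the budget constraint and the stopping rule force the accumulated loss to stay at most $\diffp$, and convert the resulting high-probability privacy-loss bound into $(\diffp,\delta)$-DP. The only difference is bookkeeping at the random horizon: where you decompose over $\{N=k\}$, the paper truncates the privacy-loss and $\delta_n$ variables at $N$ and sums conditional expectations, $\sum_n \E[\Pr[\widetilde{\cL}_n > \widetilde{\Epsilon}_n \mid \cF_{n-1}]] \leq \E[\sum_{n \leq N}\delta_n] \leq \delta$, which is the cleanest way to make your ``delicate point'' rigorous since the adaptively chosen $\delta_n$ are random and $\{N=k\}$ is not $\cF_{i-1}$-measurable for $i \leq k$.
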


\begin{proof}
    We follow a similar analysis in \cite{WhitehouseRaRoWu22} where they created a filter for \emph{probabilistic} DP mechanisms, that is the privacy loss of each mechanism can be bounded with high probability. We will write the corresponding privacy loss variables of $A_n$ to be $\cL_n$ and for the full sequence of algorithms $A_{1:n} = (A_1, \cdots, A_n)$, the privacy loss is denoted as $\cL_{1:n}$.

Define the events
\begin{align*}
&D := \left\{\exists n \leq N(x) : \cL_{1:n} > \diffp \right\}, \ \text{ and } \  E := \left\{\exists n \leq N(x): \cL_n > \Epsilon_n(A_n(x);x) \right\}.
\end{align*}
Using Bayes rule, we have that
$$
\Pr[D] = \Pr[D \cap E^c] + \Pr[D \cap E] \leq \Pr[D|E^c] + \Pr[E] = \Pr[E],
$$
where the last inequality follows from how we defined the stopping function $N(x)$. Now, we show that $\Pr[E] \leq \delta$. 
Define the modified privacy loss random variables $(\widetilde{\cL}_n)_{n \in \N}$ by
$$
\widetilde{\cL}_n := \begin{cases} \cL_n \qquad n \leq N(x) \\
0 \qquad \text{otherwise}
\end{cases}.
$$
Likewise, define the modified privacy parameter random variables $\widetilde{\Epsilon}_n(\cdot;x)$ and $\widetilde{\delta}_n(x)$ in an identical manner. Then, we can bound $\Pr[E]$ in the following manner:
\begin{align*}
    \Pr[\exists n \leq N(x) : & \cL_n > \Epsilon_n(A_n(x);x) ] = \Pr\left[\exists n \in \N : \widetilde{\cL}_n > \widetilde{\Epsilon}_n(A_n(x);x)\right] \\
    &\leq \sum_{n = 1}^\infty \Pr\left[\widetilde{\cL}_n > \widetilde{\Epsilon}_n(A_n(x);x) \right] \\
    & = \sum_{n = 1}^\infty \E\left[\Pr\left[\widetilde{\cL}_n > \widetilde{\Epsilon}_n(A_n(x);x) | \cF_{n- 1}\right] \right]\\
    &\leq \sum_{n = 1}^\infty \E\left[\widetilde{\delta}_n(x)\right] = \E\left[\sum_{n = 1}^\infty \widetilde{\delta}_n(x)\right] =\E\left[\sum_{n \leq N(x)}\delta_n(x)\right] \leq \delta.
\end{align*}
\end{proof}

\begin{remark}
With ex-post privacy, we are not trying to ensure differential privacy of each intermediate outcome.  Recall that DP is closed under post-processing, so that any post processing function of a DP outcome is also DP with the same parameter.  Our privacy analysis depends on getting actual outcomes from an ex-post private mechanism, rather than a post-processed value of it.  However, the full transcript of ex-post private mechanisms will ensure DP due to setting a privacy budget.
\end{remark}

We now consider combining zCDP mechanisms with mechanisms that satisfy ex-post privacy.  We consider a sequence of mechanisms $(A_n: \cX \to \cY)_{n \geq 1}$ where each mechanism may depend on the previous outcomes.  At each round, an analyst will use either an ex-post private mechanism or an approximate zCDP mechanism, in either case the privacy parameters may depend on the previous results as well.  

\begin{definition}[Approximate zCDP and Ex-post Private Sequence]\label{defn:zCDPExpostSequence}
Consider a sequence of mechanisms $(A_n)_{n\geq 1}$, where $A_n: \cX \to \cY$.  The sequence $(A_n)_{n \geq 1}$  is called a sequence of \emph{approximate zCDP and ex-post private mechanisms} if for each round $n$, the analyst will select $A_n(\cdot)$ to be $\delta_n(\cdot)$-approximate $\rho_n(\cdot)$-zCDP given previous outcomes $A_i(\cdot)$ for $i < n$, or the analyst will select $A_n(\cdot)$ to be $(\Epsilon_{n}(A_n(\cdot); \cdot),\delta'_{n}(\cdot))$-ex-post private conditioned on $A_i(\cdot)$ for $i < n$. In rounds where zCDP is selected, we will simply write $\cE_i( A_i(\cdot); \cdot) \equiv 0$, while in rounds where an ex-post private mechanism is selected, we will set $\rho_i(\cdot) = 0$.  
\end{definition}

We now state a composition result that allows an analyst to combine ex-post private and zCDP mechanisms adaptively, while still ensuring a target level of privacy.  Because we have two different interactive systems that are differentially private, one that uses only zCDP mechanisms and the other that only uses ex-post private mechanisms, we can then use concurrent composition to allow for the interaction for the sequence described in Definition~\ref{defn:zCDPExpostSequence}.

\begin{theorem}
\label{thm:genFilter}
Let $\diffp, \diffp',
\delta, \delta', \delta''> 0$.  Let $(A_n)_{n \geq 1}$ be a sequence of approximate zCDP and ex-post private mechanisms.  As we did in Lemma~\ref{lem:ex-post-basic-filter}, we will require that the ex-post private mechanisms that are selected at each round $n$ have ex-post privacy functions $\Epsilon_n$ that do not exceed the remaining budget from $\epsilon'$.  Consider the following function $N: \cY^\infty \to \N$ as the following for any sequence of outcomes $(y_n)_{n \geq 1}$ 
\begin{align*}
 N( (y_n)_{n \geq 1}, (y_{n} ) = \inf \left\{N_{\zCDP}( (y_{1:n-1})_{n \geq 1}), N_{\expost}( (y_{n})_{n \geq 1} ) \right\}, 
\end{align*}
where $N_{\zCDP}( (y_{n})_{n \geq 1})$ is the stopping rule given in Theorem~\ref{thm:approxCDPFilter} with privacy parameters $\diffp, \delta, \delta''$ and $N_{\expost}((y_n)_{n \geq 1})$ is the stopping rule given in Lemma~\ref{lem:ex-post-basic-filter} with privacy parameters $\diffp', \delta'$.
Then, the algorithm $A_{1:N(\cdot)}(\cdot)$ is $(\diffp + \diffp', \delta + \delta' + \delta'')$-DP, where 
\[
N(x) =N\left( (A_n(x))_{n\geq 1} \right).
\]
\end{theorem}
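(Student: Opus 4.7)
The plan is to reduce this composition result to Theorem~\ref{thm:concurrent} (concurrent composition) applied to two interactive systems: one that only handles the approximate zCDP queries and one that only handles the ex-post private queries. At each round $n$, the analyst chooses either a zCDP mechanism or an ex-post private mechanism; routing this choice to the appropriate subsystem lets us analyze each subsystem in isolation using the privacy filters already established in Theorem~\ref{thm:approxCDPFilter} and Lemma~\ref{lem:ex-post-basic-filter}.

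First, I would define two interactive systems $S_{\zCDP}$ and $S_{\expost}$, each operating on the same underlying dataset $x$. The system $S_{\zCDP}$ answers queries by running an approximate zCDP mechanism on $x$ (with parameters chosen as a function of the prior transcript), and halts once the condition defining $N_{\zCDP}$ is triggered; similarly $S_{\expost}$ runs an ex-post private mechanism on $x$ and halts once $N_{\expost}$ fires. By Theorem~\ref{thm:approxCDPFilter}, the system $S_{\zCDP}$ is $(\diffp, \delta + \delta'')$-DP, and by Lemma~\ref{lem:ex-post-basic-filter}, the system $S_{\expost}$ is $(\diffp', \delta')$-DP. Crucially, both filters are valid under fully adaptive choices of parameters, so the fact that the analyst's queries (and even whether a query is issued at all) may depend on the transcript of the other subsystem does not break either filter's guarantee.

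Next, I would show that the original sequence $A_{1:N(\cdot)}$ can be simulated by an adversary $\cA$ interacting concurrently with $(S_{\zCDP}, S_{\expost})$ in the sense of the concurrent composition definition. At each round $n$, $\cA$ inspects the current combined transcript, decides whether $A_n$ is zCDP or ex-post, and sends the query to the corresponding subsystem; it halts globally once either subsystem has halted, which exactly matches $N = \inf\{N_{\zCDP}, N_{\expost}\}$. Since there is no communication between the subsystems and $\cA$ is the sole arbiter, this fits the framework of Theorem~\ref{thm:concurrent}. Applying that theorem with basic (sequential) composition for DP then yields that the full transcript is $\bigl((\diffp) + (\diffp'),\, (\delta+\delta'') + \delta'\bigr)$-DP, which is exactly $(\diffp+\diffp', \delta+\delta'+\delta'')$-DP.

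The main obstacle I anticipate is a bookkeeping subtlety rather than a deep technical step: namely, justifying that the adaptive routing between the two subsystems is legitimate. One has to check that the parameter choices made inside $S_{\zCDP}$ remain valid conditional on the transcript of $S_{\expost}$ (and vice versa), so that the conditional-zCDP and ex-post filters still apply. This follows because the filters in Theorem~\ref{thm:approxCDPFilter} and Lemma~\ref{lem:ex-post-basic-filter} are stated with respect to arbitrary prior outcomes, and the transcript of the other subsystem can be folded into the conditioning sigma-algebra without affecting the per-round privacy guarantees. Once this is spelled out, the conclusion is immediate from Theorem~\ref{thm:concurrent}.
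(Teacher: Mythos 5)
Your proposal follows essentially the same route as the paper: split the sequence into a zCDP subsystem and an ex-post subsystem, apply Theorem~\ref{thm:approxCDPFilter} and Lemma~\ref{lem:ex-post-basic-filter} to each respectively, and combine via concurrent composition (Theorem~\ref{thm:concurrent}). The only (minor) divergence is that you assert the per-round parameters inside each subsystem may depend on the other subsystem's transcript, whereas the paper explicitly restricts each subsystem's parameters to depend only on its own prior outcomes, consistent with the concurrent-composition model in which each interactive system sees only its own interaction with the adversary.
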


\begin{proof}
We will separate out the approximate zCDP mechanisms, $(A_n^{\zCDP})_{n \geq 1}$ from the ex-post private mechanisms $(A_n^{\expost})_{n \geq 1}$ to form two separate interactive systems.  In this case, the parameters that are selected can only depend on the outcomes from the respective interactive system, e.g. $\rho_n(x)$  can only depend on prior outcomes to mechanisms $A_i^{\zCDP}(x)$ for $i < n$.  From Theorem~\ref{thm:approxCDPFilter}, we know that $A_{1:N_{\zCDP}(\cdot)}^{\zCDP}(\cdot)$ is $(\diffp,\delta + \delta'')$-DP.  We denote $\cL_n^{\expost}$ to be the privacy loss random variable for the ex-post private mechanism at round $n$.  We will also write the stopping time for the ex-post private mechanisms as $N_{\expost}(x)$.  From Lemma~\ref{lem:ex-post-basic-filter}, we know that  $A_{1:N_{\expost}(\cdot)}^{\expost}(\cdot)$ is $(\diffp', \delta')$-DP.    

From Theorem~\ref{thm:concurrent}, we know that the concurrent composition, which allows for both $A_{1:N_{\expost}(\cdot)}^{\expost}(\cdot)$ and $A_{1:N_{\zCDP}(\cdot)}^{\zCDP}(\cdot)$ to interact arbitrarily, will still be $(\diffp + \diffp', \delta + \delta' + \delta'')$-DP.  
\end{proof}

Although we are able to leverage \emph{advanced composition} bounds from traditional differential privacy for the mechanisms that are approximate zCDP, we are simply adding up the ex-post privacy guarantees, which seems wasteful.  Next, we consider how we can improve on this composition bound for certain ex-post private mechanisms.

\section{Brownian Noise Reduction Mechanisms}
We now consider composing specific types of ex-post private mechanisms, specifically the Brownian Noise Reduction mechanism.  
From Theorem 3.4 in \cite{WhitehouseWuRaRo22}, we can decompose the privacy loss as an uncentered Brownian motion, even when the stopping time is adaptively selected.  
\begin{theorem}[\citet{WhitehouseWuRaRo22}]
\label{thm:decompPL}
Let $\BM$ be the Brownian noise reduction mechanism associated with time values $(t^{(k)})_{k \geq 1}$ and a function $f$. All reference measures generated by the mechanism are those generated by the Brownian motion without shift (starting at $f(x) = 0$).  For neighbors $x \sim x'$ and stopping function $T$,  the privacy loss between $\BM^{(1:T(x))}(x)$ and $\BM^{(1:T(x'))}(x')$ is given by
\[
\cL^{(1:T(x))}_{\BM}(x, x') = \frac{||f(x') - f(x)||_2^2}{2t^{(T(x))}} + \frac{||f(x') - f(x)||_2}{t^{(T(x))}} \left< \frac{f(x') - f(x)}{||f(x') - f(x)||_2}, B^{(t^{(T(x))})} \right>,
\] 
Suppose $f$ has $\ell_2$-sensitivity at most $\Delta_2(f)$. Then, letting $a_+ := \max(0, a)$, we have the following where $(W^{(t)})_{t \geq 0}$ is a standard, univariate Brownian motion.
\[
\cL^{(1:T(x))}_{\BM}(x, x')  \leq \frac{\Delta_2(f)^2}{2t^{(T(x))}} + \frac{\Delta_2(f)}{t^{(T(x))}}\left(W^{(t^{(T(x))})}\right)_+.
\]
\end{theorem}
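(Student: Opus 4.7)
My plan is to reduce the question to a single Gaussian marginal at the terminal time $t^{(T(x))}$, compute the privacy loss as a log-density ratio, and then extract a one-dimensional Brownian motion along the sensitivity direction to obtain the stated bound.

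The first step is to invoke Lemma~\ref{lem:bm_nr} to conclude that, even for the adaptively chosen stopping function $T$, the privacy loss of the entire stopped trajectory collapses to that of the single final marginal: $\cL^{(1:T(x))}_{\BM}(x,x') = \cL^{(T(x))}_{\BM}(x,x')$. Consequently, I only need to analyze the law of $\BM^{(T(x))}(x) = f(x) + B^{(t^{(T(x))})}$, which is simply $\Normal{f(x)}{t^{(T(x))} I_d}$, and the adaptive stopping disappears from the explicit density calculation.

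The second step is the direct density computation. With respect to Lebesgue measure (equivalently, with respect to the unshifted Brownian reference measure), the marginal densities at index $k$ are $p^x(y) \propto \exp(-\|y - f(x)\|_2^2 / (2 t^{(k)}))$ and $p^{x'}(y) \propto \exp(-\|y - f(x')\|_2^2 / (2 t^{(k)}))$. Evaluating $\log(p^x / p^{x'})$ at $y = f(x) + B^{(t^{(k)})}$, expanding the squares, and cancelling the common $\|B^{(t^{(k)})}\|_2^2$ term yields the claimed equality after factoring the residual inner product through the unit vector $\hat{u} := (f(x') - f(x)) / \|f(x') - f(x)\|_2$ and specializing to $k = T(x)$.

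For the final bound I observe that, for a fixed pair of neighbors, $\hat{u}$ is a deterministic unit vector, so the scalar process $W^{(t)} := \langle \hat{u}, B^{(t)} \rangle$ is itself a standard one-dimensional Brownian motion. I then bound $\|f(x') - f(x)\|_2 \leq \Delta_2(f)$ in both the quadratic term and in the scalar coefficient multiplying the inner product, and replace $W^{(t^{(T(x))})}$ by its positive part (a valid upper bound since the multiplying coefficient is non-negative), giving exactly the inequality in the statement.

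The one place that could plausibly be difficult is the adaptive nature of $T$: a priori one might fear having to reason about a Doob-stopped martingale along the filtration generated by $(B^{(t^{(k)})})_{k \geq 1}$ in order to convert a sequence of per-round privacy losses into a single Brownian-motion expression at the terminal time. Lemma~\ref{lem:bm_nr} precisely sidesteps this concern by collapsing the joint privacy loss to the single marginal at the stopping time, after which the proof reduces to elementary Gaussian density algebra plus a one-dimensional projection argument.
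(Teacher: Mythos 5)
Your proposal is correct and follows essentially the same route as the paper: the paper likewise reduces the full stopped transcript to the final marginal via the noise-reduction property (stated as Lemma~\ref{lem:bm_nr}, and re-derived in the appendix from the independent-increments structure of the reference process), and then performs the identical Gaussian log-density-ratio algebra, unit-vector projection to a univariate Brownian motion, and positive-part/sensitivity bound. The only cosmetic difference is that you cite Lemma~\ref{lem:bm_nr} as a black box for the collapse $\cL^{(1:T(x))}_{\BM} = \cL^{(T(x))}_{\BM}$, whereas the appendix proves that collapse from scratch for general independent-increment noise processes; your remaining steps match the paper's instantiation exactly.
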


This decomposition of the privacy loss will be very useful in analyzing the overall privacy loss of a combination of Brownian noise reduction mechanisms.  

\subsection{Backward Brownian Motion Martingale \label{sect:BMcomp}}

We now present a key result for our analysis of composing Brownian noise reduction mechanisms.  Although in \cite{WhitehouseWuRaRo22}, the ex-post privacy proof of the Brownian mechanism applied Ville's inequality \citep[for proof, cf.][Lemma 1]{HowardRaMcSe20} to the (unscaled) standard Brownian motion $(B^{(t)})_{t >0}$, it turns out that the scaled standard Brownian motion $(B^{(t)}/t)$ forms a backward martingale \citep[cf.][Exercise 2.16]{revuz2013continuous} and this fact is crucial to our analysis.
\begin{lemma}[Backward martingale]\label{lem:backward}
    Let $(B^{(t)})$ be a standard Brownian motion. Define the reverse filtration $\mathcal G^{(t)} = \sigma(B^{(u)}; u \geq t)$, meaning that $\mathcal G^{(s)} \supset \mathcal G^{(t)}$ if $s<t$. For every real $\lambda$, the process 
    \[
    M^{(t)} := \exp(\lambda B^{(t)}/t - \lambda^2/(2t)); \quad t > 0
    \]
    is a nonnegative (reverse) martingale with respect to the filtration $\mathcal G = (\mathcal G^{(t)})$. Further, at any $t>0$, $\E[M^{(t)}]=1$, $M^{(\infty)} = 1$ almost surely, and $\E[M^{(\tau)}] \leq 1$ for any stopping time $\tau$ with respect to $\mathcal G$ (equality holds with some restrictions). In short, $M^{(\tau)}$ is an ``e-value'' for any stopping time $\tau$ --- an e-value is a nonnegative random variable with expectation at most one.
\end{lemma}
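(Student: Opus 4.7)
My plan is to reduce the entire lemma to the classical exponential martingale of a standard Brownian motion via \emph{Brownian time inversion}. Define
\[
W^{(u)} := u\, B^{(1/u)} \quad \text{for } u > 0, \qquad W^{(0)} := 0.
\]
It is classical that $(W^{(u)})_{u \geq 0}$ is again a standard Brownian motion: Gaussianity and the covariance $\mathrm{Cov}(W^{(u)},W^{(v)}) = uv\,\min(1/u,1/v) = \min(u,v)$ are immediate, and almost sure continuity at the origin is exactly the statement $B^{(t)}/t \to 0$ as $t \to \infty$, i.e.\ the SLLN for Brownian motion. Substituting $u = 1/t$ yields the identity
\[
M^{(t)} \;=\; \exp\!\bigl(\lambda W^{(1/t)} - \tfrac{\lambda^2}{2}(1/t)\bigr) \;=\; N^{(1/t)},
\]
where $N^{(u)} := \exp\bigl(\lambda W^{(u)} - \lambda^2 u/2\bigr)$ is the standard exponential martingale of $W$.

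Next I match the filtrations. Since $W^{(v)}$ is measurable with respect to $\sigma(B^{(1/v)})$ and conversely $B^{(s)} = s\, W^{(1/s)}$, a short check gives
\[
\mathcal G^{(t)} \;=\; \sigma\bigl(B^{(r)} : r \geq t\bigr) \;=\; \sigma\bigl(W^{(v)} : v \leq 1/t\bigr) \;=:\; \mathcal F^{(1/t)},
\]
so the reverse filtration of $B$ is exactly the forward natural filtration of $W$ read at time $1/t$. Under this correspondence the claimed reverse martingale property $\E[M^{(s)} \mid \mathcal G^{(t)}] = M^{(t)}$ for $s < t$ is literally the forward martingale identity $\E[N^{(u')} \mid \mathcal F^{(u)}] = N^{(u)}$ for $u = 1/t < u' = 1/s$, a one-line Gaussian MGF computation. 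Nonnegativity is obvious from the exponential form, and $\E[M^{(t)}] = \E[N^{(1/t)}] = 1$ is the same identity applied at a fixed time.

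The almost-sure limit $M^{(\infty)} = 1$ translates to $\lim_{u \to 0^+} N^{(u)} = \exp(0) = 1$, which follows from a.s.\ continuity of $W$ at $0$. Finally, for any $\mathcal G$-stopping time $\tau$, the random time $\sigma := 1/\tau$ (set to $0$ on $\{\tau = \infty\}$) is a stopping time of $\mathcal F$, and $\E[M^{(\tau)}] = \E[N^{(\sigma)}] \leq 1$ by optional stopping for the nonnegative martingale $N$: localize at $\sigma \wedge u_0$, where bounded optional stopping gives equality, then let $u_0 \to \infty$ and apply Fatou's lemma to obtain the inequality.

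The only genuine subtlety is this last step for unbounded $\sigma$: Fatou yields only the inequality, and this is precisely why the statement contains the parenthetical remark that equality requires additional restrictions (for example bounded $\sigma$, or a Novikov/Kazamaki-type integrability condition on $\lambda$ and $\sigma$). The remaining steps --- verifying that $W$ is a standard Brownian motion and that the two filtrations coincide --- are routine but should be written carefully so that the reduction is unambiguous.
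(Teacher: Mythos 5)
Your proof is correct, and it is the standard time-inversion argument: the paper itself gives no proof of this lemma, deferring instead to the citation of Revuz--Yor (Exercise 2.16), whose intended solution is exactly the reduction $W^{(u)} = uB^{(1/u)}$ that you carry out. Your treatment of the optional-stopping step (Fatou giving only $\leq 1$ in general, with equality under boundedness of $\sigma = 1/\tau$ --- which in fact holds in the paper's application, where stopping times live on a finite grid bounded away from $0$) correctly accounts for the lemma's parenthetical caveat.
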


Let $B_1 = (B_1^{(t)})_{t\geq  0 }, B_2 = (B_2^{(t)})_{t\geq  0 }, \dots$ be independent, standard Brownian motions, with corresponding backward martingales $M_1 =  (M_1^{(t)})_{t \geq 0}, M_2 =  (M_2^{(t)})_{t \geq 0}, \dots$ and (internal to each Brownian motion) filtrations $\mathcal G_1 = (\mathcal G_1^{(t)})_{t \geq 0}, \mathcal G_2 = (\mathcal G_2^{(t)})_{t \geq 0}, \dots$ as defined in the previous lemma. Select time values $(t_1^{(k)})_{k \geq 1}$. Let a Brownian noise reduction mechanism $\BM_1$ be run using $B_1$ and stopped at $\tau_1$. Then $E[M_1^{(\tau_1)}] \leq 1$ as per the previous lemma. Based on outputs from the $\BM_1$, we choose time values $(t_2^{(k)})_{k \geq 1} \in \sigma(B_1^{(t)}; t \geq \tau_1) := \mathcal F_1$. Now run the second Brownian noise reduction $\BM_2$ using $B_2$, stopping at time $\tau_2$. Since $B_1,B_2$ are independent, we still have that $\E[M^{(\tau_2)}_{2} | \mathcal F_1] \leq 1$. Let $\mathcal F_2:=\sigma( (B^{(t)}_1)_{t \geq \tau_1}, (B^{(t)}_2)_{t \geq \tau_2})$ be the updated filtration, based on which we choose time values $(t_3^{(k)} )_{k \geq 1}$. Because $B_3$ is independent of the earlier two, at the next step, we still have $\E[M^{(\tau_3)}_{3} | \mathcal F_2] \leq 1$. Proceeding in this fashion, it is clear that the product of the stopped e-values $E_m$ where
\begin{equation}
E_m  := \prod_{s=1}^m M^{(\tau_s)}_{s} = \exp\left(\lambda \sum_{s=1}^m \frac{B^{(\tau_s)}_{s}}{\tau_s} - \frac{\lambda^2}{2} \sum_{s=1}^m \frac{1}{\tau_s} \right) 
\label{eq:BMmartingale}
\end{equation}
is itself a (forward) nonnegative supermartingale with respect to the filtration $\cF=(\cF_n)_{n \geq 1}$, with initial value $E_0:=1$.  
Applying the Gaussian mixture method~\citep[cf.][Proposition 5]{HowardRaMcSe21}, we get that for any $\gamma,\delta > 0$ and with $\psi(t;\gamma,\delta) \defeq \sqrt{(t + \gamma) \log\left(\frac{t + \gamma}{\delta^2 \gamma} \right) }$,
\[
\Pr\left[ \sup_{m \geq 1} \left\{ \left| \sum_{s = 1}^m \frac{B^{(\tau_s)}_{s}}{\tau_s} \right| \geq \sum_{s \in [m]} \frac{1}{2\tau_s} + \psi\left(\sum_{s \in [m]} 1/\tau_s ;\gamma,\delta \right) \right\} \right] \leq \delta .
\]

This then provides an alternate way to prove Theorem 3.6 in \citet{WhitehouseWuRaRo22}.

\begin{theorem}
Let $(T_i)_{i \geq 1}$ be a sequence of stopping functions, as in Definition~\ref{def:stop}, and a sequence of time values $(t_i^{(j)}: j \in [k_i] )_{i \geq 1}$.  Let $\BM_i$ denote a Brownian noise reduction with statistic $f_i$ that can be adaptively selected based on outcomes of previous Brownian noise reductions and $f_i$ has $\ell_2$-sensitivity 1. We then have, for any $\gamma, \delta>0$,
\[
\sup_{x \sim x'}\Pr\left[\sum_{i=1}^\infty \cL_{\BM_i}^{(T_i(x))} \geq \psi\left(\sum_{i=1}^\infty 1/t_i^{(T_i(x))};\gamma,\delta\right) \right] \leq \delta.
\]
In other words, $\left(\BM_i^{(1:T_i(\cdot))}\right)_{i \geq 1}$ is $\left(\psi\left(\sum_{i=1}^\infty 1/t_i^{(T_i(\cdot))} ;\gamma,\delta \right) , \delta\right)$-ex post private.
\end{theorem}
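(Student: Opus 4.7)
I would prove this by lifting the stopped-Brownian-motion bound derived in Section~\ref{sect:BMcomp} to the privacy-loss level using the decomposition in Theorem~\ref{thm:decompPL}. Fix neighbors $x \sim x'$ and write $\tau_i := t_i^{(T_i(x))}$, $\alpha_i := \|f_i(x') - f_i(x)\|_2 \in [0,1]$, and $u_i := (f_i(x') - f_i(x))/\alpha_i$ when $\alpha_i > 0$ (the $\alpha_i = 0$ case contributes nothing and can be ignored). Theorem~\ref{thm:decompPL} then gives the identity $\cL_{\BM_i}^{(T_i(x))} = \alpha_i^2/(2\tau_i) + (\alpha_i/\tau_i)\tilde W_i^{(\tau_i)}$, where $\tilde W_i^{(t)} := \langle u_i, B_i^{(t)}\rangle$ is a standard one-dimensional Brownian motion; the $\tilde W_i$ across different $i$ inherit independence from the underlying independent $B_i$. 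Summing over $i$ and using $\alpha_i \leq 1$ on the quadratic part reduces everything to controlling the stopped signed sum $S_\infty := \sum_i (\alpha_i/\tau_i)\tilde W_i^{(\tau_i)}$.

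For $S_\infty$ I would run the reverse-to-forward construction from Section~\ref{sect:BMcomp} verbatim, but replacing the backward e-value parameter $\lambda$ by $\lambda\alpha_i$ at step $i$. For each $i$, Lemma~\ref{lem:backward} implies that $M_i^{(\tau_i)}(\lambda\alpha_i) := \exp(\lambda\alpha_i \tilde W_i^{(\tau_i)}/\tau_i - \lambda^2 \alpha_i^2/(2\tau_i))$ is a conditional e-value at the adaptive time $\tau_i$ against the reverse filtration of $B_i$, and independence of the $B_i$'s across $i$ makes $E_m(\lambda) := \prod_{i \leq m} M_i^{(\tau_i)}(\lambda\alpha_i)$ a nonnegative forward supermartingale, exactly as in~\eqref{eq:BMmartingale}. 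Mixing $\lambda$ against a Gaussian prior of precision $\gamma$ and applying Ville's inequality yields, with probability at least $1-\delta$ and uniformly in $m$, the bound $|S_m| \leq \psi(\sum_{i \leq m}\alpha_i^2/\tau_i;\gamma,\delta)$. Monotonicity of $\psi$ in its first argument together with $\alpha_i \leq 1$ upgrades the argument to $\sum_{i \leq m} 1/\tau_i$, and combining with the deterministic quadratic part reproduces the intermediate inequality displayed just above the theorem, which I would then repackage as the stated ex-post privacy function per Definition~\ref{def:expost}.

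The main obstacle is making the reverse-to-forward supermartingale step fully rigorous in the adaptive setting, where $f_i$, and hence both $u_i$ and $\alpha_i$, are measurable with respect to the transcripts of mechanisms $1,\dots,i-1$. The key point is that for any past-measurable scaling $\alpha_i$, the stopped quantity $M_i^{(\tau_i)}(\lambda\alpha_i)$ still has conditional expectation at most one given the past; this reduces to a conditional application of Lemma~\ref{lem:backward}, using that $B_i$ is drawn independently of everything prior so that the reverse martingale property survives conditioning and a past-measurable rescaling of the e-value parameter is allowed. Once this conditional e-value statement is in place, the Gaussian mixture argument and the monotonicity of $\psi$ are routine.
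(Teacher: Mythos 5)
Your proposal follows essentially the same route as the paper's: the privacy-loss decomposition of Theorem~\ref{thm:decompPL}, the backward-martingale e-values of Lemma~\ref{lem:backward} multiplied across mechanisms into the forward supermartingale of \eqref{eq:BMmartingale}, and the Gaussian mixture method with Ville's inequality; your extra care with the predictable rescaling $\lambda\alpha_i$ and the projection $\langle u_i, B_i^{(t)}\rangle$ is exactly what the paper's multivariate remark after the theorem supplies. The one caveat --- shared with the paper itself --- is that this argument delivers the bound $\sum_i \alpha_i^2/(2\tau_i) + \psi\bigl(\sum_i \alpha_i^2/\tau_i;\gamma,\delta\bigr) \leq \sum_i 1/(2\tau_i) + \psi\bigl(\sum_i 1/\tau_i;\gamma,\delta\bigr)$, matching the displayed inequality just before the theorem, so the final ``repackaging'' into $\psi\bigl(\sum_i 1/\tau_i;\gamma,\delta\bigr)$ alone does not follow unless the additive $\sum_i 1/(2\tau_i)$ term is retained in the ex-post privacy function.
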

\begin{remark}
We note here that the stopping time of each Brownian noise reduction is with respect to $\cH = (\cH^{(t)})$ where $\cH^{(t)} = \sigma(f(x) + B^{(u)} ; u > t)$.  
From the point of view of the analyst, $f(x)$ is random (being fixed, but unknown, is the same as $f(x)$ being a realization from some random mechanism). In fact, $\cH^{(t)}$ extends $\cG^{(t)}$ with $t=0$, which would reveal $f(x)$ but the analyst only has access to $t>0$ for which she pays.
\end{remark}
\begin{remark}
For the multivariate case we have $B^{(u)}=\left( B^{(u)}[i]: i \in [d]\right)$ where each coordinate is an independent Brownian motion, and we will write the filtration $\cG[i]$ to be the natural reverse filtration corresponding to the Brownian motion for index $i$.  We then define $\cG^{(t)} = \sigma\left( B^{(u)}[1], \cdots, B^{(u)}[d]: u \geq t \right)$.  From Lemma~\ref{lem:backward}, we have the following for all $\lambda\in \R$, $i \in [d]$,  and $0<s< t$
\[
\E[\exp(\lambda B^{(t)}[i] /t - \lambda^2/(2t)) \mid \cG^{(s)}[i]] \leq 1.
\]
We then consider the full $d$-dimensional Brownian motion so that with a unit vector $v = (v[1], \cdots, v[d]) \in \R^d$ we have
\begin{align*}
\E\left[\exp\left( \lambda \cdot \sum_{i=1}^d v[i] \cdot B^{(t)}[i]/t - \frac{\lambda^2}{2t} \right) \mid \cG^{(s)} \right] & = \E\left[\prod_{i=1}^d \exp\left( \lambda \cdot v[i] \cdot B^{(t)}[i]/t - \frac{\lambda^2\cdot v[i]^2}{2t} \right) \mid \cG^{(s)} \right] \\
& = \prod_{i=1}^d \E\left[\exp\left( \lambda \cdot v[i] \cdot B^{(t)}[i]/t - \frac{\lambda^2\cdot v[i]^2}{2t} \right) \mid \cG^{(s)}[i] \right] \\
& \leq 1.
\end{align*}
\end{remark}

\subsection{Privacy Filters with Brownian Noise Reduction Mechanisms}
Given Lemma~\ref{lem:backward} and the decomposition of the privacy loss for the Brownian mechanism given in Theorem~\ref{thm:decompPL}, we will be able to get tighter composition bounds of multiple Brownian noise reduction mechanisms rather than resorting to a general ex-post privacy composition
in Lemma~\ref{lem:ex-post-basic-filter}.
It will be important to only use time values with the Brownian noise reduction mechanisms that cannot exceed the remaining privacy loss 
budget, similar to how in Lemma~\ref{lem:ex-post-basic-filter} we did not want to select an ex-post private mechanism whose privacy loss could exceed the remaining privacy budget.  
We then make the following condition on the time values $(t_n^{(j)})_{j=1}^{k_n}$ that are used for each Brownian noise reduction given prior outcomes $y_1, \cdots, y_{n-1}$ from the earlier Brownian noise reductions with time values $(t_i^{(j)})_{j=1}^{k_i}$ and stopping functions $T_i$ for $i < n$ and overall budget $\rho>0$
\begin{equation}
     \frac{1}{2t_n^{(k_n)}} \leq \rho - \sum_{i < n} \frac{1}{2t_i^{(T_i(y_{1:i}))}}.
    \label{eq:stoppingfunctionassumption}
\end{equation}

\begin{lemma}
\label{lem:BMfilter}
    Let $\rho>0$ and consider a sequence of $(\BM_n)_{n \geq 1}$ each with statistic $f_n: \cX \to \R^{d_n}$ with $\ell_2$ sensitivity $1$, stopping function $T_n$, and time values $(t_n^{(j)})_{j=1}^{k_n}$ which can be adaptively selected and satisfies \eqref{eq:stoppingfunctionassumption}.  Consider the function $N: \cY^\infty \to \N \cup \{\infty\}$ where $\cY$ contains all possible outcome streams from $(\BM_n)_{n \geq 1}$ as the following for any sequence of outcomes $(y_n)_{n \geq 1}$:
    \[
    N((y_n)_{n \geq 1}) = \inf \left\{n : \rho = \sum_{i\in [n]}  \frac{1}{2t_i^{(T_{i}(y_{1:i}))}} \right\}.
    \]
    Then, for $\delta >0$,  $\BM_{1:N(\cdot)}(\cdot)$ is $(\rho + 2\sqrt{\rho \log(1/\delta)}, \delta)$-DP, where
    $
    N(x) = N((\BM_n(x))_{n \geq 1}).
    $
\end{lemma}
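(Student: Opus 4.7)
The plan is to combine the exact privacy-loss decomposition of Theorem~\ref{thm:decompPL} with the backward-martingale construction of Section~\ref{sect:BMcomp} to obtain a single time-uniform concentration bound on the accumulated privacy loss, then convert this to $(\epsilon,\delta)$-DP in the standard way. Fix any two neighbors $x \sim x'$. For each Brownian noise reduction $\BM_n$, Theorem~\ref{thm:decompPL} lets us write the privacy loss up to its stopping time $T_n(x)$ as
\[
\cL^{(1:T_n(x))}_{\BM_n} = \frac{\Delta_n^2}{2\tau_n} + \frac{\Delta_n}{\tau_n}\, W_n^{(\tau_n)},
\]
where $\tau_n := t_n^{(T_n(x))}$, $\Delta_n := \|f_n(x') - f_n(x)\|_2 \leq 1$, and $W_n^{(\tau_n)} := \langle v_n, B_n^{(\tau_n)}\rangle$ is the projection of the $d_n$-dimensional Brownian motion $B_n$ onto the unit vector $v_n$ pointing from $f_n(x)$ to $f_n(x')$, hence a standard univariate Brownian motion in $t$. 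Summing from $n=1$ to $N(x)$, using $\Delta_n^2 \leq 1$ and the defining identity $\sum_{n \leq N(x)} 1/(2\tau_n) = \rho$,
\[
\sum_{n \leq N(x)} \cL^{(1:T_n(x))}_{\BM_n} \;\leq\; \rho \;+\; \sum_{n \leq N(x)} \frac{\Delta_n}{\tau_n}\, W_n^{(\tau_n)},
\]
so it suffices to control the residual stochastic sum with high probability.

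For each $n$, applying Lemma~\ref{lem:backward} with the parameter $\lambda\Delta_n$ in place of $\lambda$ shows that
\[
M_n^{(t)} \;:=\; \exp\!\left(\lambda \Delta_n W_n^{(t)}/t \;-\; \lambda^2 \Delta_n^2/(2t)\right)
\]
is a nonnegative reverse martingale in $t$ with respect to the reverse filtration $\cG_n$ of $B_n$, and in particular $M_n^{(\tau_n)}$ is an e-value conditional on the forward filtration $\cF_{n-1}$ generated by the transcripts of $\BM_1,\dots,\BM_{n-1}$ (the quantities $\Delta_n$, $v_n$, $(t_n^{(j)})$, $T_n$ are $\cF_{n-1}$-measurable and $B_n$ is independent of $\cF_{n-1}$). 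The splicing argument of Section~\ref{sect:BMcomp} (cf.\ equation~\eqref{eq:BMmartingale}) then yields that
\[
E_m := \prod_{n=1}^m M_n^{(\tau_n)} = \exp\!\left(\lambda \sum_{n \leq m}\frac{\Delta_n W_n^{(\tau_n)}}{\tau_n} \;-\; \frac{\lambda^2}{2}\sum_{n \leq m}\frac{\Delta_n^2}{\tau_n}\right)
\]
is a nonnegative forward supermartingale with $E_0 = 1$ relative to $(\cF_m)_{m\geq 0}$. Ville's inequality gives $\Pr[\sup_m E_m \geq 1/\delta] \leq \delta$, so taking logarithms yields, for any fixed $\lambda > 0$, with probability at least $1 - \delta$ and uniformly in $m$,
\[
\sum_{n \leq m}\frac{\Delta_n W_n^{(\tau_n)}}{\tau_n} \;\leq\; \frac{\log(1/\delta)}{\lambda} \;+\; \frac{\lambda}{2}\sum_{n \leq m}\frac{\Delta_n^2}{\tau_n}.
\]

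Specialize to $m = N(x)$: by the stopping rule $\sum_{n \leq N(x)} 1/(2\tau_n) = \rho$, and $\Delta_n^2 \leq 1$ yields $\sum_{n \leq N(x)} \Delta_n^2/\tau_n \leq 2\rho$. Choosing $\lambda := \sqrt{\log(1/\delta)/\rho}$ balances the two terms on the right into $2\sqrt{\rho\log(1/\delta)}$, so combining with the pointwise decomposition from the first paragraph, on an event of probability at least $1-\delta$,
\[
\sum_{n \leq N(x)} \cL^{(1:T_n(x))}_{\BM_n} \;\leq\; \rho + 2\sqrt{\rho \log(1/\delta)}.
\]
This is a probabilistic-DP bound on the total privacy loss of $\BM_{1:N(\cdot)}(\cdot)$, which implies $(\rho + 2\sqrt{\rho\log(1/\delta)}, \delta)$-DP in the standard way.

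The main obstacle is not the one-line optimization over $\lambda$ but verifying that the forward supermartingale structure survives the full adaptivity: each $\tau_n$ is a stopping time with respect to the \emph{reverse} filtration of its own $B_n$, while $f_n$, the grid $(t_n^{(j)})$, the stopping function $T_n$, and the global stopping time $N$ all live on the \emph{forward} side. The independence of the $B_n$ combined with the $\cF_{n-1}$-measurability of $\Delta_n$, $v_n$, and $(t_n^{(j)})$ is exactly what lets $M_n^{(\tau_n)}$ function as a conditional e-value given $\cF_{n-1}$ and thus enables the splicing. The stopping assumption \eqref{eq:stoppingfunctionassumption} guarantees that the equality $\sum 1/(2\tau_n) = \rho$ is achieved exactly at $N$, so no slack is lost in bounding $\sum \Delta_n^2/\tau_n \leq 2\rho$.
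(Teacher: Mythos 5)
Your proposal is correct and follows essentially the same route as the paper: decompose each stopped privacy loss via Theorem~\ref{thm:decompPL}, use the backward martingale of Lemma~\ref{lem:backward} to make each stopped term a conditional e-value given the forward filtration, splice these into a forward supermartingale, and apply a Chernoff-type bound with an optimized $\lambda$ under the budget identity $\sum_n 1/(2\tau_n)=\rho$. The only (immaterial) differences are that you invoke Ville's inequality where the paper applies optional stopping followed by Markov's inequality, and that you carry $\Delta_n\le 1$ explicitly rather than normalizing the sensitivity to $1$ from the outset.
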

\begin{proof}
    We use the decomposition of the privacy loss $\cL_{n}^{(1:k)}$ at round $n \geq 1$ for Brownian noise reduction in Theorem~\ref{thm:decompPL} that is stopped at time value $t_n^{(k)}$ to get the following with the natural filtration $(\cF_n(x))_{n\geq 1}$  generated by $(A_n(x))_{n \geq 1}$.  Recall that we have for the Brownian motion $(B_n^{(t)})_{t > 0}$ used in the Brownian noise reduction
    \[
    \cL_n^{(1:k)} = \cL_n^{(k)} = \frac{1}{2t_n^{(k)}} + \frac{B_n^{(t_n^{(k)})}}{t_n^{(k)}}.
    \]
    
    From Lemma~\ref{lem:backward} we have for all $\lambda \in \R$ and $n \geq 1$ with time value $t_n^{(k)}$
    \[
    \E\left[ \exp\left( \lambda X^{(k)}_n - \frac{\lambda^2}{2t_n^{(k)}} \right) \mid \cF_{n-1}(x) \right] \leq 1 , \qquad \text{ where } \qquad X_n^{(k)} = \cL_n^{(k)} - \frac{1}{2t_n^{(k)}} .
    \]
    We then form the following process, 
    \[
    M^{(k_1, \cdots, k_n)}_n = \exp\left( \lambda \sum_{i=1}^nX^{(k_i)}_i - \sum_{i=1}^n\frac{\lambda^2}{2t_i^{(k_i)}} \right).
    \]
    Hence, with fixed time values $(t_n^{(k_n)})_{n \geq 1}$ for $n \geq 1$ we have for all $\lambda$
    \[
    \E\left[ M^{(k_1, \cdots, k_n)}_n \mid \cF_{n-1}(x) \right] \leq 1.
    \]
    We then replace $(k_i)_{i \leq n}$ with an adaptive stopping time functions $\left( T_i\right)_{i \leq n}$, rather than fixing them in advance, in which case we rename $M^{(k_1, \cdots, k_n)}_n$ as $M_n^{\text{BM}}$. We know from Lemma~\ref{lem:backward} that $M^{\text{BM}}_n$ is still an e-value for any $n$. 
    We then apply the optional stopping theorem to conclude that with the stopping time $N(x)$ that $\E\left[M_{N(x)}^{\text{BM}}\right] \leq 1$.  By the definition of our stopping time, so that $\sum_{i=1}^{N(x)} \frac{1}{2t_i^{T_i(x)}} = \rho$, we have for all $\lambda$ 
    \[
    \E\left[ \exp\left( (\lambda-1) \sum_{i=1}^{N(x)} L_i^{(T_i(x))} \right) \right] \leq e^{\lambda(\lambda - 1) \rho}
    \]
    We then set $\lambda = \frac{2\rho + 2 \sqrt{\rho \log(1/\delta)}}{2\rho}$ to get 
    \[
    \Pr\left[ \sum_{i=1}^{N(x)} L_i^{(T_i(x))} \geq \rho + 2 \sqrt{\rho \log(1/\delta)} \right] \leq \delta
    \]
    We then have a high probability bound on the overall privacy loss, which then implies differential privacy.
\end{proof}

One approach to defining a privacy filter for both approximate zCDP and Brownian noise reduction mechanisms would be to use concurrent composition, as we did in Lemma~\ref{lem:ex-post-basic-filter}.  However, this would require us to set separate privacy budgets for approximate zCDP mechanisms and Brownian noise reduction, which is an extra (nuissance) parameter to set.  

We now show how we can combine Brownian noise reduction and approximate zCDP mechanisms with a single privacy budget.  We will need a similar condition on the time values selected at each round for the Brownian noise reduction mechanisms as in \eqref{eq:stoppingfunctionassumption}.  Note that at each round either an approximate zCDP or Brownian noise reduction mechanism will be selected.  Given prior outcomes $y_1, \cdots, y_{n-1}$ and previously selected zCDP parameters $\rho_1, \rho_2(y_1), \cdots, \rho_{n}(y_{1:n-1})$ --- noting that at round $n$ where $\BM$ is selected we have $\rho_n(y_{1:n-1}) = 0$ or if a zCDP mechanism is selected we simply set $k_n = 1$ and $\frac{1}{t_n^{(1)}} = 0$ --- we have the following condition on $\rho_n(y_{1:n-1})$ and the time values $(t_n^{(j)})_{j=1}^{k_n}$ if we select a $\BM$ at round $n$ and have overall budget $\rho>0$,
\begin{equation}
     0 \leq \frac{1}{2t_n^{(k_n)}} + \rho_n(y_{1:n-1}) \leq \rho - \sum_{i < n} \left( \rho_i(y_{1:i-1}) +  \frac{1}{2t_i^{(T_i(y_{1:i}))}} \right).
    \label{eq:stoppingfunctionassumption2}
\end{equation}

\begin{theorem}
\label{thm:BMAdvancedComp}
Let $\rho>0$ and $\delta\geq 0$.  Let $(A_n)_{n \geq 1}$ be a sequence of approximate zCDP and ex-post private mechanisms where each ex-post private mechanism at round $n$ is a Brownian Mechanism with an adaptively chosen stopping function $T_n$, a statistic $f_n$ with $\ell_2$-sensitivity equal to 1, and time values $(t_n^{(j)})_{j=1}^{k_n}$ that satisfy the condition in \eqref{eq:stoppingfunctionassumption2}.  Consider the function $N: \cY^\infty \to \N$ as the following for any sequence of outcomes $(y_{n})_{n \geq 1}$ :
\begin{align*}
N( (y_n)_{n \geq 1}) & = \inf\left\{n :  \delta < \sum_{i\in [n+1]} \delta_i(y_{1:i-1})  \text{ or }  \rho =  \sum_{i \in [n]} 
 \left( \rho_{i}(y_{1:i-1}) + \frac{1}{2t_i^{(T_{i}(y_{1:i}))}} \right)  \right\}.
\end{align*}
Then for $\delta''>0$, the algorithm $A_{1:N(\cdot)}(\cdot)$ is $(\rho + 2\sqrt{2 \rho \log(1/\delta'')}, \delta +\delta'')$-DP, where the stopping function is 
$
N(x) =N( (A_n(x))_{n\geq 1}).
$
\end{theorem}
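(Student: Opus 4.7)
The plan is to construct a single nonnegative supermartingale encoding the combined privacy loss of both approximate zCDP rounds and Brownian noise reduction rounds, apply the optional stopping theorem at $N(x)$, and then convert this into a DP guarantee via Markov's inequality. The $\delta$ term will arise from the mixture decomposition of approximate zCDP summed across rounds, while the $\delta''$ term will come from the Chernoff-style tail bound on the accumulated privacy loss.

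The key unification is to introduce the effective per-round parameter $\hat\rho_n := \rho_n(\cdot) + \mathbf{1}\{\text{BM at round }n\}/(2t_n^{(T_n)})$, where exactly one of the two summands is zero depending on whether round $n$ is zCDP or a Brownian noise reduction. I claim that both mechanism types satisfy the same-form conditional moment generating bound
\[
\E\!\left[\exp\!\left(\mu L_n - \mu(\mu+1)\hat\rho_n\right) \;\Big|\; \cF_{n-1}\right] \leq 1, \qquad \mu \geq 0,
\]
where $L_n$ is the round-$n$ privacy loss and $\cF_{n-1}$ is the forward filtration containing the entire history of the first $n-1$ completed mechanisms (including, in the BM case, the full internal Brownian path up to the adaptive stopping time $T_{n-1}$). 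For zCDP rounds this is the standard R\'enyi moment bound (with $\mu = \lambda-1$), applied after the mixture decomposition of Definition~\ref{def:approxazcdp} isolates a ``bad'' event of probability at most $\delta_n$. For BM rounds this is exactly the e-value of~\eqref{eq:BMmartingale} built from the backward martingale of Lemma~\ref{lem:backward} and the privacy-loss decomposition of Theorem~\ref{thm:decompPL}; the adaptive internal stopping $T_n$ is handled because the reverse-filtration optional stopping theorem gives expectation at most one for each individual BM, and independence across rounds propagates this into the forward filtration.

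Given the per-round bound, the process $M_n := \exp\!\bigl(\mu \sum_{i \leq n} L_i - \mu(\mu+1) \sum_{i \leq n}\hat\rho_i\bigr)$ is a nonnegative supermartingale on the good event $G$ that no approximate-zCDP mixture failed. A union bound as in Lemma~\ref{lem:ex-post-basic-filter}, combined with the $\delta$-triggered branch of the stopping rule, yields $\Pr[G^c] \leq \sum_{i \leq N(x)}\delta_i \leq \delta$. Applying the optional stopping theorem at $N(x)$ and using the stopping rule and condition~\eqref{eq:stoppingfunctionassumption2}, which together guarantee $\sum_{i \leq N(x)} \hat\rho_i \leq \rho$, I obtain $\E[\exp(\mu \sum_{i \leq N(x)} L_i)\mathbf{1}_G] \leq \exp(\mu(\mu+1)\rho)$. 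Markov's inequality followed by optimization over $\mu > 0$ produces a Chernoff bound of the form $\Pr[\sum_{i \leq N(x)}L_i \geq \rho + c\sqrt{\rho\log(1/\delta'')},\, G] \leq \delta''$, and combining with $\Pr[G^c] \leq \delta$ gives the stated $(\rho + 2\sqrt{2\rho\log(1/\delta'')},\, \delta+\delta'')$-DP guarantee.

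The main obstacle will be verifying that the forward supermartingale construction remains valid under the fully mixed adaptivity allowed here: at each round the analyst chooses the type of mechanism, its parameters, and its sensitivity function as a function of all prior outputs, \emph{and} each BM's stopping time $T_n$ depends on the internal Brownian increments of that BM through its reverse filtration, so $\hat\rho_n$ is not $\cF_{n-1}$-measurable. The resolution is to enlarge $\cF_n$ to contain each completed BM's entire sample path (not just its released outputs), so that $\hat\rho_n$ is $\cF_n$-measurable, and then to invoke Lemma~\ref{lem:backward} at each round to certify the conditional e-value bound despite the adaptive, reverse-filtration stopping. This is precisely the interleaving of forward and reverse filtrations sketched in Figure~\ref{fig:ReverseAndForwardFiltration}; the novelty over Lemma~\ref{lem:BMfilter} and Theorem~\ref{thm:approxCDPFilter} is merging the two moment bounds into a single joint MGF analysis rather than invoking concurrent composition with separate budgets as in Theorem~\ref{thm:genFilter}.
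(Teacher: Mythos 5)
Your proposal follows essentially the same route as the paper's proof: the mixture decomposition of Definition~\ref{def:approxazcdp} isolates the $\delta$ failure mass, the backward-martingale e-value of Lemma~\ref{lem:backward} combined with the privacy-loss decomposition of Theorem~\ref{thm:decompPL} supplies the per-round conditional MGF bound for Brownian rounds (valid at the adaptive internal stopping time because of optional stopping in the reverse filtration), and the product over rounds forms a forward supermartingale to which optional stopping at $N(x)$ is applied---the paper merely phrases the final step as establishing that the stopped process is $\delta$-approximate $\rho$-zCDP and then invoking Lemma~\ref{lem:CDPconversion}, which is the same Chernoff computation you carry out directly. Your unified compensator $\mu(\mu+1)\hat\rho_n$ with $\hat\rho_n = \rho_n + \tfrac{1}{2 t_n^{(T_n)}}$ is in fact slightly tighter than the one in the paper's displayed process (which uses $1/t_i^{(T_i)}$ rather than $1/(2t_i^{(T_i)})$ in the Brownian compensator, the source of the extra $\sqrt{2}$ in the stated bound), so your argument certainly suffices for the theorem as claimed.
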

\begin{proof}
We will show that $A_{1:N(\cdot)}(\cdot)$ is $\delta$-approximate $\rho$-zCDP and then use Lemma~\ref{lem:CDPconversion} to obtain a DP guarantee stopping at $N(x)$ as defined in the theorem statement. 

We follow a similar analysis to the proof of Theorem 1 in \cite{WhitehouseRaRoWu22}.   Let $P_{1:n}$ and $Q_{1:n}$ denote the joint distributions of $(A_1, \dots, A_n)$ with inputs $x$ and $x'$, respectively. We overload notation and write $P_{1:n}(y_1, \dots, y_n)$ and $Q_{1:n}(y_1, \dots, y_n)$ for the likelihood of $y_1, \dots, y_n$ under input $x$ and $x'$ respectively.  We similarly write $P_n(y_n \mid y_{1:n-1})$ and $Q_n(y_n \mid y_{1:n - 1})$ for the corresponding conditional densities.

For any $n \in \N,$ we have
\begin{align*}
P_{1:n}(y_{1}, \cdots, y_n) &= \prod_{m=1}^{n} P_m(y_m \mid y_{1:m-1}),\\
Q_{1:n}(y_{1}, \cdots, y_n) &= \prod_{m=1}^{n} Q_m(y_m \mid y_{1:m-1}).
\end{align*}

 When then show that the two likelihoods can be decomposed as weighted mixtures of $P'$ and $P''$, as well as  $Q'$ and $Q''$, respectively such that the mixture weights on $P'$ and $Q'$ are at least $(1-\delta)$, and for all $\lambda \geq 1$,
 \begin{align}\label{eq:renyibound} 
\max\Big\{     D_\lambda\left(P' \| Q' \right), D_\lambda\left(Q'\| P'\right)\Big\} \leq \rho\lambda .
 \end{align}

By our assumption of approximate zCDP at each step $n$, we can write the conditional likelihoods of $P_n$ and $Q_n$ as the following convex combinations: 
\begin{align*}
P_n(y_n \mid y_{1:n-1}) &= (1 - \delta_n(y_{1:n-1})) P'_n(y_n \mid y_{1:n-1}) + \delta_n(y_{1:n-1}) P''_n(y_n \mid y_{1:n-1}),
\\
 Q_n(y_n \mid y_{1:n-1}) &= (1 - \delta_n(y_{1:n-1})) Q'_n(y_n \mid y_{1:n-1}) + \delta_n(y_{1:n-1}) Q''_n(y_n \mid y_{1:n-1}),
\end{align*}
such that for all $\lambda \geq 1$ and all prior outcomes $y_{1:n-1}$,
we have both 
\begin{align}
&D_\lambda\left(P_n'( \cdot \mid y_{1:n-1} ) ~\|~ Q_n'(\cdot \mid y_{1:n-1}) \right) \leq \rho_n(y_{1:n-1}) \lambda, \\
 &  D_\lambda\left(Q_n'( \cdot \mid y_{1:n-1} ) ~\|~ P_n'( \cdot \mid y_{1:n-1} ) \right)  \leq \rho_n(y_{1:n-1}) \lambda.
    \end{align}

Note that at each round, we either select an approximate zCDP mechanism or select a Brownian noise reduction, and in the latter case $\delta_n(x) \equiv 0$ and $\rho_n(x) \equiv 0$, which then means $P_n' \equiv P_n$ and $Q_n' \equiv Q_n$ at those rounds $n$.  We will write the distribution $P_{1:n}$ for any prefix of outcomes from $A_1(x), \cdots, A_n(x)$ and similarly we will write the distribution $Q_{1:n}$ for the prefix of outcomes from $A_1(x'), \cdots A_n(x')$.  We can then write these likelihood as a convex combination of likelihoods, using the fact that $\sum_{n=1}^\infty \delta_n(y_{1:n-1}) \leq \delta$ for all $y_1, y_2, \cdots$.  
\begin{align}
\label{eq:P}  P_{1:n}(y_1, \cdots, y_n) & = (1-\delta) \underbrace{\prod_{\ell=1}^n P_\ell'(y_\ell | y_{1:\ell-1})}_{P'_{1:n}(y_1, \cdots, y_n )} + \delta P_{1:n}''(y_1, \cdots, y_n)
\\
\label{eq:Q} Q_{1:n}(y_1, \cdots, y_n) &= (1-\delta) \underbrace{\prod_{\ell=1}^n Q_\ell'(y_\ell | y_{1:\ell-1})}_{Q'_{1:n}(y_1, \cdots, y_n ) } + \delta Q_{1:n}''(y_1, \cdots, y_n )
\end{align}

For any fixed $\lambda \geq 1$ and $k \geq 1$, consider the following filtration  $\cF' = (\cF'_n)_{n \geq 1}$  where $\cF'_n = \sigma(Y_1', \cdots, Y_n')$,   with $Y_1' \sim P_1'$, $Y_2' \mid Y_1' \sim P_2'(\cdot \mid Y_1'), \cdots, Y_k' \mid Y_{1:k-1}' \sim P_k'(\cdot \mid Y_{1:k-1}')$.

We will first consider the Brownian noise reduction mechanisms with time values $(t_{n}^{(k)}: k \geq 1)_{n \geq 1}$ to be stopped at fixed rounds $(k_n)_{n \geq 1}$, although not every round will have a Brownian noise reduction mechanism selected with Brownian motion $(B^{(t)}_n)_{t > 0}$.  We will write out the privacy loss for the Brownian noise reduction mechanisms stopped at rounds $(k_n)_{n \geq 1}$ as $\cL_n^{(1:k_n)}$ where from Theorem~\ref{thm:decompPL} we have, 
\[
 \cL_n^{(1:k_n)} = \cL_n^{(k_n)} = \frac{1}{2t_n^{(k_n)}} + \frac{B_n^{(k_n)}}{t_n^{(k_n)}}.
\]
We will add the noise reduction outcomes to the filtration, so that $\cF''_n = \sigma(Y'_1, \cdots, Y'_{n-1}, (B_n^{(t)})_{t \geq k_n} )$.  From Lemma~\ref{lem:backward}, we know for all $\lambda \in \R$
\[
\E\left[ \exp\left( \lambda \left( \cL_n^{(1:k_n)} - \frac{1}{2t_n^{(k_n)}} \right) - \frac{\lambda^2}{2t_n^{(k_n)}} \right) 
 \mid \cF''_{n-1}\right] = \E\left[ \exp\left( \lambda \cL_n^{(1:k_n)} - \frac{\lambda(\lambda + 1)}{2t_n^{(k_n)}} \right) 
 \mid \cF''_{n-1}\right] \leq 1.
\]

We then replace $(k_i)_{i \leq n}$ with an adaptive stopping time functions $(T_i)_{i \leq n}$ with corresponding stopping times $(T_i(x))_{i \leq n}$, rather than fixing them in advance, in which case we rename $\cL^{(1:k_n)}_n$ as $\cL_n^{\BM_n}$ and the same inequality still holds with the filtration $\cF'''_n = \sigma(Y'_1, \cdots, Y'_{n-1}, (B_n^{(t)})_{t \geq T_n(x)})$.  Note that we will call $Y_n' = (f(x) + B_n^{(t)})_{t \geq T_n(x)}$ so that $\cF'''_n = \cF'_n$.

At rounds $n$ where a Brownian noise reduction mechanism is not selected, we simply have $1/t_n^{(k_n)}= 0$ and $\cL_n^{(1:k_n)} = 0$.  We also consider the modified privacy losses for the approximate zCDP mechanisms $\cL'_n$ where
\[
\cL'_n = \cL'_n(y_{1:n}) = \log\left(\frac{P'_n(y_n| y_{<n} )}{Q'_n(y_n | y_{<n})}\right), \text{ where } y_{1:n} \sim P'_{1:n}
\] 
Due to mechanisms being zCDP, we then have for any $\lambda \geq 1$
\[
\E\left[ \exp\left( (\lambda-1)\cL_n' - \lambda(\lambda - 1) \rho_n(Y_{1:n-1}')/2 \right) \mid \cF'_{n-1} \right] \leq 1.
\]

Because at each round $n$, the mechanism selected is either approximate zCDP or a Brownian noise reduction with a stopping function, we can write the privacy loss at each round $i$ as the sum $\cL'_i + \cL_i^{\BM}$ so that for all $\lambda \geq 1$ we have 
\begin{align}
X_n^{(\lambda)} &:= \sum_{i\leq n} \left\{ \cL'_i + \cL_i^{\BM}  - \lambda \left(\rho_i(Y_{1:n-1}') + \frac{1}{t_i^{(T_i(x)}} \right) \right\},   \label{process1}
\\
M_n^{(\lambda)} &:= \exp\left( (\lambda - 1) X_n^{(\lambda)} \right).   \label{process2}
\end{align}
We know that $(M_n)$ is a supermartingale with respect to $(\cF_n')_{n \geq 1}$.  From the optional stopping theorem, we have
\[
\E[M_{N(x)}^{(\lambda)}] \leq 1.
\]
This will ensure \eqref{eq:renyibound} holds.  Although for rounds $n$ where we select a Brownian noise reduction we have $P'_n = P_n$ and $Q_n' = Q_n$, we still need to show that for rounds $n$ where approximate zCDP mechanisms were selected the original distributions $P_n$ and $Q_n$ can be written as weighted mixtures including $P'_n$ and $Q'_n$, respectively.
This follows from the same analysis as in \cite{WhitehouseRaRoWu22}, so that for all outcomes $y_{1}, y_2, \cdots$ where $\sum_{n=1}^\infty \delta_n(y_{1:n-1}) \leq \delta$ we have 
\begin{align*}
P_{1:n}(y_1, y_2, \cdots, y_n ) \geq (1-\delta) \prod_{m=1}^{n} P'_m\left( y_m \mid y_{1:m-1} \right),
\end{align*}
and similarly for $Q_{1:n}$.  As is argued in \cite{WhitehouseRaRoWu22}, it suffices to show that the two likelihoods of the stopped process $P_{1:N}$ and $Q_{1:N}$ can be decomposed as weighted mixtures of $P'_{1:N}$ and $P''_{1:N}$
as well as $Q'_{1:N}$ and $Q''_{1:N}$, respectively such that the weights on $P'_{1:N}$ and $Q'_{1:N}$ are at least $1-\delta$.  Note that from our stopping rule, we haven for all $\lambda > 0$
\[
\E[M_{N(x)}^{(\lambda)}] \leq 1 \implies \E\left[ e^{\lambda \sum_{n=1}^{N(x)}(\cL_n' + \cL_n^{\BM}) } \right] \leq e^{\lambda(\lambda+1)\rho}
\]
We then still need to convert to DP, which we do with the stopping rule $N(x)$ and the conversion lemma between approximate zCDP and DP in Lemma~\ref{lem:CDPconversion}.  
\end{proof}

\section{Application: Bounding Relative Error}

Our motivating application will be returning as many counts as possible subject to each count being within $\alpha\%$ relative error, i.e. if $y$ is the true count and $\hat{y}$ is the noisy count, we require $\left||\hat{y} / y| - 1 \right| < \alpha$.  It is common for practitioners to be able to tolerate some relative error to some statistics and would like to not be shown counts that are outside a certain accuracy.  Typically, DP requires adding a predetermined standard deviation to counts, but it would be advantageous to be able to add large noise to large counts so that more counts could be returned subject to an overall privacy budget.

\subsection{Doubling Method \label{sect:Doubling}}

A simple baseline approach would be to use the ``doubling method", as presented in \cite{LigettNeRoWaWu17}.  This approach uses differentially private mechanisms and checks whether each outcome satisfies some condition, in which case you stop, or the analyst continues with a larger privacy loss parameter.  The downside of this approach is that the analyst needs to pay for the accrued privacy loss of all the rejected values.  However, handling composition in this case turns out to be straightforward given Theorem~\ref{thm:approxCDPFilter}, due to \cite{WhitehouseRaRoWu22}.  We then compare the `doubling method" against using Brownian noise reduction and applying Theorem~\ref{thm:BMAdvancedComp}.  

We now present the doubling method formally.  We take privacy loss parameters $\diffp^{(1)} < \diffp^{(2)} < \cdots$, where $\diffp^{(i+1)} = \sqrt{2} \diffp^{(i)}$.  Similar to the argument in Claim B.1 in \cite{LigettNeRoWaWu17}, we use the $\sqrt{2}$ factor because the privacy loss will depend on the sum of square of privacy loss parameters, i.e. $\sum_{i=1}^m (\diffp^{(i)})^2$ up to some iterate $m$, in Theorem~\ref{thm:approxCDPFilter} as $\rho_i = (\diffp^{(i)})^2/2$ is the zCDP parameter.    This means that if $\diffp^\star$ is the privacy loss parameter that the algorithm would have halted at, then we might overshoot it by $\sqrt{2} \diffp^\star$.  Further, the overall sum of square privacy losses will be no more than $4 (\diffp^\star)^2$.  Hence, we refer to the doubling method as doubling the square of the privacy loss parameter.  

\subsection{Experiments}
\begin{figure}[ht]
\begin{center}
\includegraphics[width=0.65\textwidth]{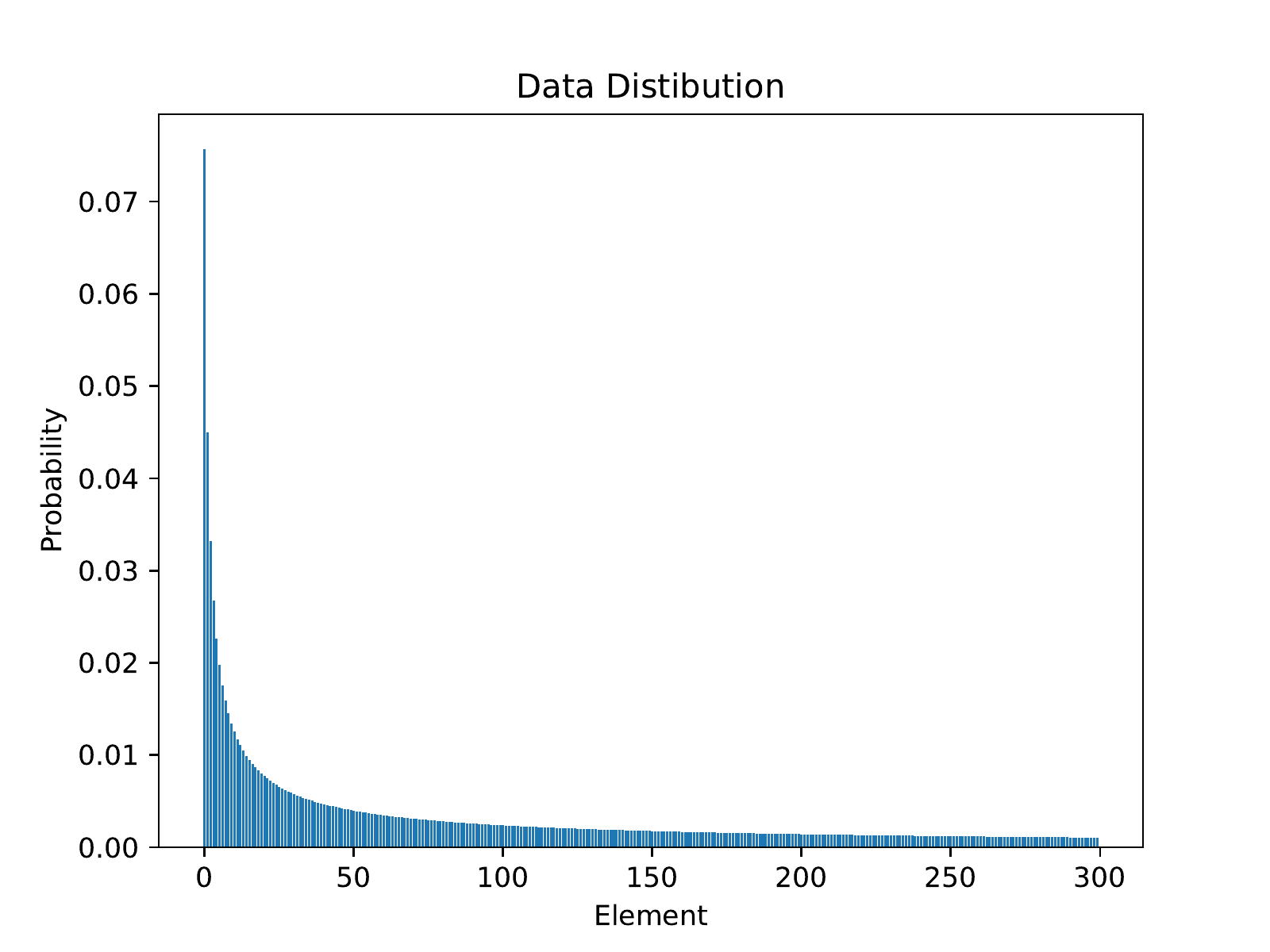}
\caption{Data from a Zipf Distribution with parameter $a = 0.75$ and max value of 300. \label{fig:zipf}}
\end{center}
\end{figure}

We perform experiments to return as many results subject to a relative error tolerance $\alpha$ and a privacy budget $\diffp, \delta>0$.  We will generate synthetic data from a Zipf distribution, i.e. from a density $f(k) \propto k^{-a}$ for $a > 0$ and $k \in \N$.  We will set a max value of the distribution to be $300$ and $a = 0.75$.  We will assume that a user can modify each count by at most 1, so that the $\ell_0$-sensitivity is 300 and $\ell_\infty$-sensitivity is 1.  
See Figure~\ref{fig:zipf} for the data distribution we used.  

In our experiments, we will want to first find the top count and then try to add the least amount of noise to it, while still achieving the target relative error, which we set to be $\alpha = 10\%$.  To find the top count, we apply the Exponential Mechanism \citep{McSherryTa07} by adding Gumbel noise with scale $1/\diffp_{\text{EM}}$ to each sensitivity 1 count (all 300 elements' counts, even if they are zero in the data) and take the element with the top noisy count.  From \cite{CesarRo20}, we know that the Exponential Mechanism with parameter $\diffp_{\text{EM}}$ is $\diffp_{\text{EM}}^2/8$-zCDP, which we will use in our composition bounds.  For the Exponential Mechanism, we select a fixed parameter $\diffp_{\text{EM}} = 0.1$.

After we have selected a top element via the Exponential Mechanism, we then need to add some noise to it in order to return its count.  Whether we use the doubling method and apply Theorem~\ref{thm:approxCDPFilter} for composition or the Brownian noise reduction mechanism and apply Theorem~\ref{thm:BMAdvancedComp} for composition, we need a stopping condition.  Note that we cannot use the true count to determine when to stop, but we can use the noisy count and the privacy loss parameter that was used.  Hence we use the following condition based on the noisy count $\hat{y}$ and the corresponding privacy loss parameter $\diffp^{(i)}$ at iterate $i$:  
\[
1- \alpha< \left| (\hat{y} + 1 / \diffp^{(i)}) / (\hat{y} - 1/\diffp^{(i)}) \right| \leq 1 + \alpha \quad \text{ and }  \quad |\hat{y}| > 1/\diffp^{(i)}.
\]

Note that for Brownian noise reduction mechanism at round $n$, we use time values $t^{(i)}_n = 1/(\diffp_n^{(i)})^2$.  We also set an overall privacy budget of $(\diffp, \delta'') = (10, 10^{-6})$.  To determine when to stop, we will simply consider the sum of squared privacy parameters and stop if it is more than roughly 2.705, which corresponds to the overall privacy budget of $\diffp = 10$ with $\delta'' = 10^{-6}$.  If the noisy value from the largest privacy loss parameter does not satisfy the condition above, we discard the result.

We pick the smallest privacy parameter squared to be $(\diffp_n^{(1)})^2 = 0.0001$ for each $n$ in both the noise reduction and the doubling method and the largest value will change as we update the remaining sum of square privacy loss terms that have been used. 
 We then set 1000 equally spaced parameters in noise reduction to select between $0.0001$ and the largest value for the square of the privacy loss parameter.  We then vary the sample size of the data in $\{8000, 16000, 32000, 64000, 128000 \}$ and see how many results are returned and of those returned, how many satisfy the actual relative error, which we refer to as \emph{precision}.  Note that if 0 results are returned, then we consider the precision to be 1.  Our results are given in Figure~\ref{fig:relativeErrorSynthetic} where we give the empirical average and standard deviation over 1000 trials for each sample size.  

\begin{figure}[h]
\includegraphics[width=0.45\textwidth]{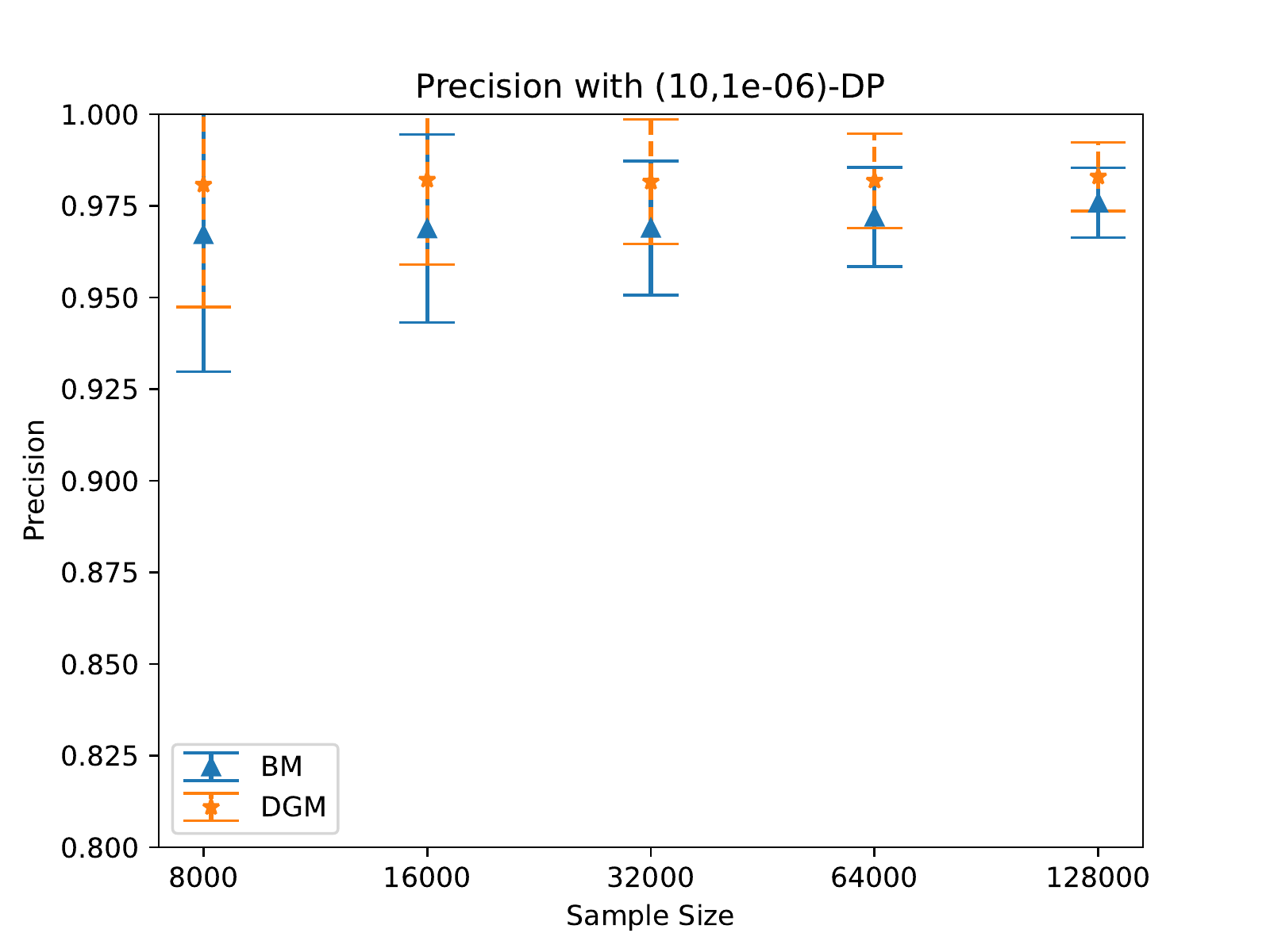}
\includegraphics[width=0.45\textwidth]{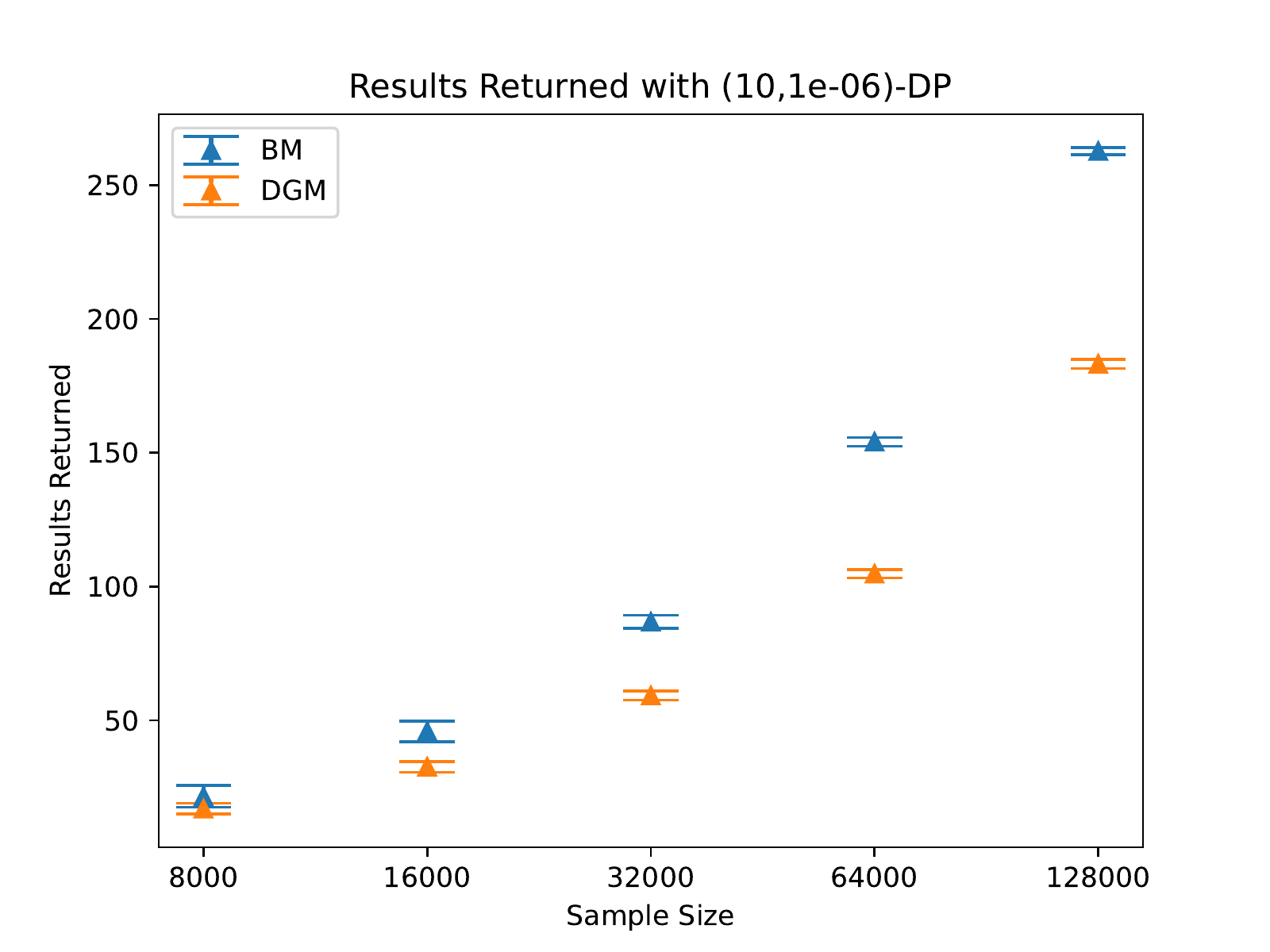}
\caption{Precision and number of results returned on Zipfian data for various sample sizes. \label{fig:relativeErrorSynthetic}}
\end{figure}

We also evaluated our approach on real data from Reddit comments from \url{https://github.com/heyyjudes/differentially-private-set-union/tree/ea7b39285dace35cc9e9029692802759f3e1c8e8/data}.  This data consists of comments from Reddit authors.  To find the most frequent words from distinct authors, we take the set of all distinct words contributed by each author, which can be arbitrarily large and form the resulting histogram which has $\ell_\infty$-sensitivity 1 yet unbounded $\ell_2$-sensitivity.  To get a domain of words to select from, we take the top-1000 words from this histogram.  We note that this step should also be done with DP, but will not impact the relative performance between using the Brownian noise reduction and the Doubling Gaussian Method.  

We then follow the same approach as on the synthetic data, using the Exponential Mechanism with $\diffp_{\text{EM}} = 0.01$, minimum privacy parameter $\diffp_n^{(1)} = 0.0001$, relative error $\alpha = 0.01$, and overall $(\diffp = 1, \delta= 10^{-6})$-DP guarantee.  In 1000 trials, the Brownian noise reduction precision (proportion of results that had noisy counts within 1\% of the true count) was on average 97\% (with minimum 92\%) while the Doubling Gaussian Method precision was on average 98\% (with minimum 93.5\%).  Furthermore, the number of results returned by the Brownian noise reduction in 1000 trials was on average 152 (with minimum 151), while the number of results returned by the Doubling Gaussian method was on average 109 (with minimum 108).

\section{Conclusion}
We have presented a way to combine approximate zCDP mechanisms and ex-post private mechanisms while achieving an overall differential privacy guarantee, allowing for more general and flexible types of interactions between an analyst and a sensitive dataset.  Furthermore, we showed how this type of composition can be used to provide overall privacy guarantees subject to outcomes satisfying strict accuracy requirements, like relative error.  We hope that that this will help extend the practicality of private data analysis by allowing the release of counts with relative error bounds subject to an overall privacy bound.  

There are several open questions with this line of work.  
In particular, we leave open the problem of showing ex-post privacy composition that improves over basic composition for certain mechanisms.  Although we only studied the Brownian noise reduction, we conjecture that Laplace noise reduction mechanisms~\cite{KoufogiannisHaPa17} compose in a similar way, achieving privacy loss that depends on a sum of squared realized privacy loss parameters.  Furthermore, we leave open the problem of improving computational run time of the noise reduction mechanisms.  In particular, if we are only interested in the last iterate of the Brownian noise reduction with a particular stopping function, can we simply sample from a distribution in 
\emph{one-shot} to arrive at that value, rather than iteratively checking lots of intermediate noisy values, which seems wasteful.  Lastly, we hope that this will lead to future work in designing new ex-post private mechanisms based on some of the primitives of differential privacy, specifically the Exponential Mechanism.  

\section{Acknowledgements}
We would like to thank Adrian Rivera Cardoso and Saikrishna Badrinarayanan for helpful comments.  This work was done while G.S. was a visiting scholar at LinkedIn. ZSW was supported in part by the NSF awards 2120667 and 2232693 and a Cisco Research Grant.  AR was supported by NSF DMS-2310718.


\bibliography{bib2}
\bibliographystyle{abbrvnat}

\appendix

\section[A]{Noise Reduction Mechanisms}

We provide a more general approach to handle both the Laplace and Brownian noise reduction mechanisms from \cite{KoufogiannisHaPa17, LigettNeRoWaWu17, WhitehouseWuRaRo22}.  Consider a discrete set of times $\{a^{(j)}\}$ with $0< a^{(j)} \leq b$.  We will allow the stopping time and the number of steps to be adaptive. We will handle the univariate case and note that the analysis extends to the multivariate case.  We will write time functions $(\phi^{(k)}, k \geq 1)$ to satisfy the following
\begin{align*}
    &\phi^{(1)} \equiv b > 0
    \\
    &\phi^{(k)}: \{a^{(j)}\} \times \R \to \{a^{(j)}\} \text{ such that $\phi^{(k)}(t,z) < t$ for all $(t,z) \in \{a^{(j)}\} \times \R$ and $k \geq 2$}
\end{align*}

Consider the noise reduction mechanisms where we have $M^{(1,p)} = t^{(1)} = b$, $M^{(1,q)}(x) = f(x) + Z(t^{(1)})$ where $Z(t)$ is either a standard Brownian motion or a Laplace process \cite{KoufogiannisHaPa17}, and for $k > 1$ we have
\begin{align*}
    M^{(k,p)}(x) & =  \phi^{(k)}(M^{(k-1, p)}(x), M^{(k-1,q)}(x) )
\\
    M^{(k,q)}(x) &= f(x) + Z(M^{(k,p)}(x))
\end{align*}

We will then write 
\[
M^{(1:k)}(x) = \left( (M^{(k,p)}(x), M^{(k,q)}(x)), \cdots, (M^{(1,p)}(x), M^{(1,q)}(x)) \right)
\]

We then consider the extended mechanism $M^*(x) = (T(x), M^{(1:T(x))}(x) )$, where $T(x)$ is a stopping function.  Hence, $M^*(x)$ takes values in $\Z \times ((0,b) \times \R)^{\infty, *}$ where $\left( (0,b] \times \R\right)^{\infty, *}$ is a collection of all finite sequences of the type $( (t^{(i)}, z^{(i)}): i \in [k])$ for $k = 1, 2, \cdots$ and $0<t^{(k)} < \cdots, t^{(1)} = b$ and $z^{(i)} \in \R$ for $i \in [k]$.

Let $m^*$ be the probability measure on this space generated by $M^*(x^*)$ where $x^*$ is a point where $f(x^*) = 0$.  Note that $x^*$ might be a fake value that needs to be added to make the function equal zero.  Let $m(x)$ be the probability measure on this space generated by $M^*(x)$.  We will then compute densities with respect to $m^*$.  

Let $A$ be a measurable set in $\Z \times \left( (0,b] \times \R\right)^{\infty, *}$.   Let $A^{(j)} \subseteq A$ on which the last (smallest) $p$-coordinate of every point is equal to $a^{(j)}$.  Then 
\[
\Pr[M^*(x) \in A] = \sum_{j}\Pr[M^*(x) \in A^{(j)}].  
\]

By construction, $M^*(x)$ is a function of $(Z(t) + f(x), 0 \leq t \leq b)$.  Therefore, for each $j$, we have
\[
\Pr[M^*(x) \in A^{(j)}] = \Pr[\left( Z(t) + f(x), 0 \leq t \leq b \right) \in (M^*)^{-1}(A^{(j)})]
\]
where the set of paths is of the following form for a measurable subset $D^{(j)} \in C([a^{(j)}, b])$ of continuous functions on $[a^{(j)}, b]$
\[
(M^*)^{-1}(A^{(j)}) = \left\{ (y(t): 0 \leq t \leq b) : (y(t): a^{(j)} \leq t \leq b) \in D^{(j)} \right\}.
\]
The standard Brownian motion and the Laplace process both have independence of increments, so we have the following with $p_t$ as the density for $Z(t)$
\begin{align*}
\Pr&[\left( Z(t) + f(x): 0 \leq t \leq b\right) \in \{ (y(t): 0 \leq t \leq b): (y(t) : a^{(j)} \leq t \leq b) \in D^{(j)}  \}] \\
& \qquad = \Pr[ (Z(t) + f(x): a^{(j)} \leq t \leq b) \in D^{(j)} ] \\
& \qquad = \E\left[ \1{(Z(t): a^{(j)} \leq t \leq b) \in D^{(j)}} \frac{p_{a^{(j)}}(Z(a^{(j)}) - f(x))}{p_{a^{(j)}}(Z(a^{(j)}))} \right] \\
& \qquad = \E\left[ \1{M^*(x^*) \in A^{(j)}} \frac{p_{M^{(T(x^*),q)}(x^*)} (Z(M^{(T(x^*),q)}(x^*)) - f(x)) }{p_{M^{(T(x^*),q)}(x^*)}(Z( M^{(T(x^*),q)}(x^*) ))} \right]
\end{align*}
where the second equality follows from the fact that the law of a shifted process with independent increments on $(a^{(j)}, b)$ is equivalent to the law of the non-shifted process and its density is the ratio of the two densities evaluated at its left most point.  We then conclude that
\[
\Pr[M^*(x) \in A] = \E\left[ \1{M^*(x^*) \in A} \frac{p_{M^{(T(x^*),p)}(x^*)} (Z(M^{(T(x^*),p)}(x^*)) - f(x)) }{p_{M^{(T(x^*),p)}(x^*)}(Z( M^{(T(x^*),p)}(x^*) ))} \right]
\]
Therefore $m(x)$ is absolutely continuous with respect to $m^*$.  To write the density (the Radon-Nikodym derivative), recall that the density is evaluated at some point $\bbs \in \Z \times ( \{a^{(j)}\} \times \R)^{\infty, *}$.  Let $t(\bbs)$ be the smallest $p$-value in $\bbs$ and $z(\bbs)$ be the corresponding space value.  Then we have
\[
p_x^{(1:T(x))}(\bbs) = \frac{dm(x)}{dm^*} (\bbs) = \frac{p_{t(\bbs)} (z(\bbs) - f(x)) }{p_{t(\bbs)}(z(\bbs) )}
\]

Similarly, we consider $M^*_{\text{last}}(x) = (T(x), M^{(T(x))}(x) )$ so that the state space is $\Z \times \{a^{(j)}\} \times \R$.  We will write $\hat{m}^*$ to be the probability measure on this space generated by $M^*_{\text{last}}(x^*)$ and $\hat{m}(x)$ to be the probability measure on this space generated by $M^*_{\text{last}}(x)$.  We then compute densities with respect to $\hat{m}^*$.

Let $A$ now be a measurable subset of $\Z \times \{a^{(j)}\} \times \R$ and let $A^{(j)} \subseteq A$ with the $p$-coordinate equal to $a^{(j)}$.  Then we follow a similar argument to what we have above to show that 
\[
\Pr\left[ M^*_{\text{last}}(x) \in A \right] = \E\left[ \1{M^*_{\text{last}}(x^*) \in A} \frac{p_{M^{(T(x^*),p)}(x^*)} (Z(M^{(T(x^*),p)}(x^*)) - f(x)) }{p_{M^{(T(x^*),p)}(x^*)}(Z( M^{(T(x^*),p)}(x^*) ))} \right].
\]
Hence, for $(k,t,z) \in \Z \times \{a^{(j)}\} \times \R$ we have
\[
p_x^{(T(x))}(k,t,z) = \frac{dm(x)}{dm^*} (k,t,z) = \frac{p_{t} (z - f(x)) }{p_{t}(z )}
\]
We then consider the privacy loss for both $M^*$, denoted as $\cL^{(1:T(x))}$, and $M^*_{\text{last}}$, denoted as $\cL^{(T(x))}$, under neighboring data $x$ and $x'$.  We have
\begin{equation}
    \cL^{(1:T(x))} = \log\left( \frac{p_{M^{(T(x),p)}(x)} (Z(M^{(T(x),p)}(x)) - f(x)) }{p_{M^{(T(x),p)}(x)} (Z(M^{(T(x),p)}(x)) - f(x'))} \right) = \cL^{(T(x))}
    \label{eq:noiseReductPL}
\end{equation}

We then instantiate the Brownian noise reduction, in which case
\[
p_t(z - f(x)) / p_t(z) = \exp\left( -\frac{1}{2t} (z - f(x) )^2 + \frac{1}{2t} z^2\right) = \exp\left(\frac{f(x)}{t}z - \frac{f(x)^2}{2t}  \right) 
\]
Without loss of generality, consider the function $g(y) = f(y) - f(x)$ so that we apply 
\[
\log\left( \frac{p_t(z - g(x) )}{p_t(z - g(x'))} \right) = -\frac{g(x')}{|g(x')| t}z |g(x')| + \frac{g(x')^2}{2t}  
\]
We then use the Brownian motion $(B(t))_{t \geq 0}$ to get the privacy loss $\cL_{\BM}^{(1:T(x))}$
\begin{align*}
\cL_{\BM}^{(1:T(x))} &= \frac{(f(x) - f(x')^2}{2t} -   \frac{f(x) - f(x')}{|f(x) - f(x')| t} |f(x) - f(x')|\cdot B\left( M^{(T(x),p)}(x) \right) \\
& = \frac{(f(x) - f(x')^2}{2t} -   \frac{1}{t} |f(x) - f(x')|\cdot W\left( M^{T(x),p}(x) \right)    
\end{align*}
where $(W(t) = z \cdot B(t) )$ is a standard Brownian motion for $|z| = 1$.  

There is another noise reduction mechanism based on Laplace noise, originally from \cite{KoufogiannisHaPa17} and shown to be ex-post private in \cite{LigettNeRoWaWu17}.  We first show that it is indeed a noise reduction mechanism with a stopping function and consider the resulting privacy loss random variable.  We focus on the univariate case, yet the analysis extends to the multivariate case.

We construct a Markov process $(X(t))_{t \geq 0}$ with $X(0) = 0$ such that, for each $t > 0$, $X(t)$ has the Laplace distribution with parameter $t$, which has density $p(z) = \tfrac{1}{2t} \exp\left( -| z|/t  \right)$ for $z \in \R$. The process we construct has independent increments, i.e. for any $0 \leq t^{(1)} < t^{(2)} < \cdots < t^{(k)}$ the following differences are independent,
\[
X(t^{(0)}), X(t^{(1)}) - X(t^{(0)}), \cdots, X(t^{(k)}) - X(t^{(k-1)}).
\]
Hence, the process is Markovian.  
The idea of constructing such a process is that a Laplace random variable with parameter $t>0$ is a symmetric, infinitely divisible random variable without a Gaussian component whose L\'evy measure has density
\[
g(z;t) = \frac{e^{-|z| / t}}{|z|}, \ z \neq 0.
\]
Let $U$ be an infinitely divisible random measure on $[0,\infty)$ with Lebesgue control measure and local L\'evy density
\begin{equation}
\label{eq:laplaceIncrements}
\psi(z;t) = t^{-2} e^{-|z|/t}, \ z \neq 0, t > 0.    
\end{equation}
We define $X(t) = U([0,t])$, where $t \geq 0$.  Then $X(0) = 0$ and for $t>0$, $X(t)$ is a symmetric, infinitely divisible random variable without a Gaussian component, whose L\'evy measure has the density equal to 
\begin{align*}
    \int_{0}^t \psi(z; u) du = \int_0^t u^{-2} e^{-|z|/u} du = \int_{1/t}^\infty e^{-|z| s} ds = g(z;t).
\end{align*}
That is $X(t)$ is a Laplace random variable with parameter $t$ and the resulting process has independent increments by construction.  Note that the process $(X(t))_{t \geq 0}$ has infinitely many jumps near $0$. On an interval
$[t^{(1)}, t^{(2)}]$ with $0 < t^{(1)} < t^{(1)}$, it is flat with probability $(t^{(1)}/t^{(2)})^2$.

We can then describe the Laplace noise reduction algorithm in a way similar to the Brownian Noise Reduction.
\begin{definition}[Laplace Noise Reduction]
\label{def:lnr}
Let $f : \cX \to \R^d$ be a function and $(t^{(k)})_{k \geq 1}$ a sequence of time values. Let $(X(t))_{t \geq 0}$ be the Markov process described above independently in each coordinate and $T: (\R^d)^* \to \N$ be a stopping function. 
The Laplace noise reduction associated with $f$, time values $(t^{(k)})_{k \geq 1}$, and stopping function $T$ is the algorithm $\LNR : \cX \to ( (0, t^{(1)}] \times \R^d)^*$ given by
\[
\LNR(x) = \LNR^{(1:T(x))}(x) = \left(t^{(k)}, f(x) + X(t^{(k)}) \right) _{k \leq T(x)}.
\]
\end{definition}

We then aim to prove the following
\begin{lemma}
\label{lem:laplaceNoiseReduction}
   The privacy loss $\cL_{\LNR}$ for the Laplace noise reduction associated with $f$, time values $(t^{(k)})_{k \geq 1}$, and stopping function $T$  can be written as 
   \[
   \cL_\LNR = \cL_\LNR^{(1:T(x))} = \cL_\LNR^{(T(x))}.
   \]
   Furthermore we have
   \[
   \cL_\LNR^{(1:T(x))} = -\frac{1}{M^{(T(x),p)}(x)} \left( |X(M^{(T(x),p)}(x))| -  |X(M^{(T(x),p)}(x)) - \Delta_1(f)| \right) 
   \]
\end{lemma}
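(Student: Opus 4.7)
The plan is to leverage the general framework for noise reduction mechanisms developed at the start of this appendix. That framework already derived, via identity~\eqref{eq:noiseReductPL}, that for any noise reduction built on a noise process $Z(t)$ with independent increments whose marginal at every $t>0$ admits a density $p_t$, the privacy loss of the full transcript coincides with the privacy loss of only the last released iterate, and both equal $\log(p_t(Z(t)-f(x))/p_t(Z(t)-f(x')))$ at the stopping-time coordinate $t = M^{(T(x),p)}(x)$. So the task reduces to (i) checking that the constructed Laplace process $(X(t))_{t\geq 0}$ actually fits into this framework, and (ii) instantiating the generic log-ratio with the Laplace density.

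For step (i), I would observe that by construction $X(t) = U([0,t])$ for the infinitely divisible random measure $U$ on $[0,\infty)$ with local L\'evy density given in~\eqref{eq:laplaceIncrements}, and the disjoint-set independence of $U$ immediately yields independent increments for $X$. Moreover, for every $t>0$ the marginal $X(t)$ is Laplace with parameter $t$, so it has the density $p_t(z) = \tfrac{1}{2t}e^{-|z|/t}$. These are precisely the two ingredients used in the derivation leading to~\eqref{eq:noiseReductPL}, so the equality $\cL_\LNR^{(1:T(x))} = \cL_\LNR^{(T(x))}$ follows at once.

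For step (ii), I would substitute this density into the log-ratio. Writing $g(y) := f(y) - f(x)$ so that $g(x)=0$ and $|g(x')| \leq \Delta_1(f)$, and setting $t = M^{(T(x),p)}(x)$, a direct computation gives
\[
\log\frac{p_t(X(t))}{p_t(X(t) - g(x'))} \;=\; \frac{1}{t}\bigl(|X(t) - g(x')| - |X(t)|\bigr).
\]
Specializing to the worst-case shift $|g(x')| = \Delta_1(f)$ and factoring out the minus sign yields exactly the stated formula $-\tfrac{1}{M^{(T(x),p)}(x)}\bigl(|X(M^{(T(x),p)}(x))| - |X(M^{(T(x),p)}(x)) - \Delta_1(f)|\bigr)$.

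The main obstacle — really the only subtlety — is confirming that the absolute-continuity argument used to prove~\eqref{eq:noiseReductPL} actually applies to a pure-jump process like the Laplace process, rather than only to Brownian motion. The key point, which I would spell out, is that the argument only ever invokes the marginal density $p_{a^{(j)}}$ at a strictly positive time $a^{(j)}>0$; conditional on the value $X(a^{(j)})$, the independent-increment structure on $[a^{(j)},b]$ is identical under the two shifts $f(x)$ and $f(x')$, so the Radon-Nikodym derivative between the two shifted path laws collapses to the ratio of Laplace marginal densities evaluated at the smallest time coordinate of the transcript. The infinitely many small jumps of $X$ near $0$ are irrelevant because the stopping function always returns a positive time value from the grid $\{a^{(j)}\}$.
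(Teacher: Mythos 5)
Your proposal is correct and follows essentially the same route as the paper: both invoke the general noise-reduction identity~\eqref{eq:noiseReductPL} (which rests only on independent increments and the existence of a marginal density at strictly positive times) and then instantiate the log-density ratio with the Laplace marginal $p_t(z)=\tfrac{1}{2t}e^{-|z|/t}$ after recentering via $g(y)=f(y)-f(x)$. Your extra remark that the pure-jump nature of $X$ is harmless because the argument only uses the marginal at $a^{(j)}>0$ is a welcome clarification but not a departure from the paper's proof.
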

\begin{proof}
Because the Laplace process has independent increments, we get the same expression for the privacy loss as in \eqref{eq:noiseReductPL} where we substitute the Laplace process for $Z(t)$, which has the following density
\[
p_t(z - f(x)) / p_t(z) = \exp\left( -\frac{1}{t} \left( |z - f(x)| -  |z| \right)  \right).
\]
We will again use $g(y) = f(y) - f(x)$ for neighboring datasets $x, x'$ with $f(x') > f(x)$ without loss of generality to get
\[
\log\left(p_t(z - g(x)) / p_t(z - g(x')) \right) = -\frac{1}{t} \left( |z| -  |z - g(x')| \right) 
\]
Therefore, we have from \eqref{eq:noiseReductPL}
\[
\cL_{\LNR}^{(1:T(x))} = \cL_{\LNR}^{(T(x))} = -\frac{1}{M^{(T(x),p)}(x)} \left( |X(M^{(T(x),p)}(x))| -  |X(M^{(T(x),p)}(x)) - |f(x) - f(x')| | \right).
\]
\end{proof}

We now want to show a similar result that we had for the Brownian noise reduction mechanism in Lemma~\ref{lem:backward}, but for the Laplace process.  Instead of allowing general time step functions $\phi^{(j)}$, we will simply look at time values $0< t^{(k)} < \cdots < t^{(1)} = b$ so that 
\[
\cL_{\LNR}^{(1:T(x))} = -\frac{1}{t^{(T(x))}} \left( |X(t^{(T(x))})| -  |X(t^{(T(x))}) - |f(x) - f(x')| |\right)
\]
This form of the privacy loss for the Laplace noise reduction might be helpful in determining whether we can get a similar backward martingale as in the Brownian noise reduction case in Lemma~\ref{lem:backward}.  We leave the problem open to try to get a composition bound for Laplace noise reduction mechanisms that improves over simply adding up the ex-post privacy bounds as in Theorem~\ref{lem:basicComp}.

\end{document}